%% file: main.tex
\documentclass[lettersize,journal]{IEEEtran}
\usepackage{amsmath,amsfonts}

\usepackage{algorithm}
\usepackage{array}
\usepackage[caption=false,font=normalsize,labelfont=sf,textfont=sf]{subfig}
\usepackage{textcomp}
\usepackage{stfloats}
\usepackage{url}
\usepackage{verbatim}
\usepackage{graphicx}
\usepackage{cite}
\usepackage{booktabs}
\usepackage{threeparttable}

\usepackage{caption}
\usepackage{algorithm,algpseudocode,tabularx}
\usepackage{xcolor}

\usepackage{acronym}
\input{Acronyms.tex}
\usepackage{dsfont}
\usepackage{amsthm}
\theoremstyle{plain}
\newtheorem{theorem}{Theorem}[section]

\newtheorem{lemma}[theorem]{Lemma}
\newtheorem{corollary}[theorem]{Corollary}
\theoremstyle{definition}

\newtheorem{assumption}[theorem]{Assumption}
\theoremstyle{remark}

\usepackage{tikz}
\usetikzlibrary{positioning, arrows.meta, fit, backgrounds, calc}
\usepackage{enumitem}
\usepackage{placeins}
\usepackage{pgfplots}
\pgfplotsset{compat=newest}

\begin{document}
\title{Beyond Labels: Information-Efficient Human-in-the-Loop Learning using Ranking and Selection Queries}

\author{Belén Martín-Urcelay, Yoonsang Lee, Matthieu R. Bloch, Christopher J. Rozell,
\thanks{B. Martín-Urcelay (e-mail: burcelay3@gatech.edu), Y. Lee, M. R. Bloch  and C. J. Rozell are with the School of Electrical and Computer Engineering, Georgia Institute of Technology, Atlanta, GA 30332 USA.}}

\maketitle

\begin{abstract}
Integrating human expertise into machine learning systems often reduces the role of experts to labeling oracles, a paradigm that limits the amount of information exchanged and fails to capture the nuances of human judgment. We address this challenge by developing a human-in-the-loop framework to learn binary classifiers with rich query types, consisting of item ranking and exemplar selection. We first introduce probabilistic human response models for these rich queries motivated by the relationship experimentally observed between the perceived implicit score of an item and its distance to the unknown classifier. Using these models, we then design active learning algorithms that leverage the rich queries to increase the information gained per interaction. We provide theoretical bounds on sample complexity and develop a tractable and computationally efficient variational approximation
. Through experiments with simulated annotators derived from crowdsourced word-sentiment and image-aesthetic datasets, we demonstrate significant reductions on sample complexity. We further extend active learning strategies to select queries that maximize information rate, explicitly balancing informational value against annotation cost. This algorithm in the word sentiment classification task reduces learning time by more than 57\% compared to traditional label-only active learning. 
\end{abstract}


\section{Introduction}
\label{sec:introduction}
Integrating machine learning systems with the nuanced judgments of human experts is a critical goal in domains ranging from image \cite{HelblingPrefGen:Attributes} and text \cite{Chen2025LearningModels} generation to word sense disambiguation \cite{Zhu2007ActiveProblem} and medical diagnosis \cite{Roy2022DemystifyingMedicine}. However, transferring human expertise to computational systems remains challenging. Although humans excel at making complex judgments, they often struggle to articulate the precise features or explicit logic that guide their decisions \cite{Casper2023}. This gap between intuitive expertise and machine-interpretable explanations largely reduces the role of humans to that of labeling oracles \cite{Canal2019, Shekhar2021ActiveAbstention}. As oracles, experts provide labels that enable supervised learning algorithms to approximate the implicit decision functions.

A drawback of this oracle-based paradigm is the substantial human effort and cost required to acquire large labeled datasets. Active learning \cite{Settles2009} has emerged as an efficient approach to mitigate this cost. Active learning strategies select informative examples to query, thereby significantly reducing the number of required labels. Despite these advances, the information obtained from such queries is constrained by the simplistic nature of labels \cite{Ashktorab2021AI-AssistedOverreliance}. In practice, queries are often limited to 
binary classifications, for which the information gained is inherently capped at just one bit \cite{Canal2021FeedbackLearning, Shekhar2020ActiveAbstention}. This information bottleneck raises the question: can we move beyond simple labeling to \emph{richer queries that provide more information per human interaction, while remaining intuitive for humans?}

Even when presented with binary labeling tasks (e.g., ``Is the weather today good or bad?''), humans may engage in richer cognitive processes. Humans often categorize items by recalling exemplars and comparing to reference points (e.g. ``This is the worst weather we have had all week''). Despite this capacity for comparative reasoning, most human-in-the-loop strategies artificially constrain experts to binary labeling roles. To harness the underlying context-rich evaluation, we consider alternative query types: ranking items by attribute strength or selecting the most representative exemplar from a list. Capitalizing on these richer query types requires quantitative models of human responses that render such queries machine-interpretable and optimizable,  as well as an algorithm that selects informative, cost-aware queries and learns from the corresponding human answers. We address these requirements through the following contributions:
\begin{itemize}
    \item \textbf{Human response models for rich queries on off-the-shelf embeddings.} We introduce probabilistic response models to ranking and exemplar-selection queries. These models are based on the observed relationship between the perceived score of an item and its distance to the decision boundary in the embedding space. This enables the use of rich queries that capture more information per interaction than traditional labeling. \looseness=-1
    \item  \textbf{Theoretical guarantees and empirical improvements in sample complexity.} We derive theoretical bounds for the expected stopping time. Empirical evaluations with simulated annotators based on human data demonstrate up to 85\% reduction in human interactions compared to traditional active labeling approaches. \looseness=-1
    \item \textbf{Cost-aware information-rate optimization with human timing models.} We formulate query selection as maximizing expected bits of information per second rather than bits per interaction. Using response time models derived from a crowdsourced study, our cost-aware approach reduces annotation time by more than half compared to label-only active learning on word sentiment classification tasks, demonstrating practical time savings beyond sample-complexity gains. \looseness=-1
\end{itemize}

Parts of this work have been previously presented at the Conference of Decision and Control \cite{Martin-Urcelay2024EnhancingClassification}. This version substantially extends it in four directions: We introduce ranking queries that gather full orderings with labels in a single interaction; we provide sample-complexity guarantees that explicitly quantify how query type and embedding dimension affect stopping time; we demonstrate generalization beyond word sentiment by adding experiments on image aesthetic classification; and we incorporate an information-rate objective together with empirically fitted human timing models to optimize time costs. Together, these additions turn the original proof-of-concept into a more theoretically grounded and cost-aware framework for rich-query human-in-the-loop learning. The paper is organized as follows. In Section~\ref{sec:method} we introduce the components of our method: human response models, an information-theoretic question selection strategy, and a tractable active learning algorithm for query item selection. In Sections~\ref{sec:theory} and ~\ref{sec:emp_sample_complexity} we present theoretical and empirical results on sample complexity. Finally, in Section~\ref{sec:emp_time} we describe crowdsourced experiments, derive human time cost models, and show expected time savings from our query-type selection based on information rate for the word sentiment classification task.

\section{Proposed Method}\label{sec:method}

    \begin{figure*}[htp]
        \centering
        \hspace*{-0.35cm}\begin{tabular}{cccc}
            \includegraphics[width=0.24\linewidth , clip, trim=3cm 9cm 3cm 9cm]{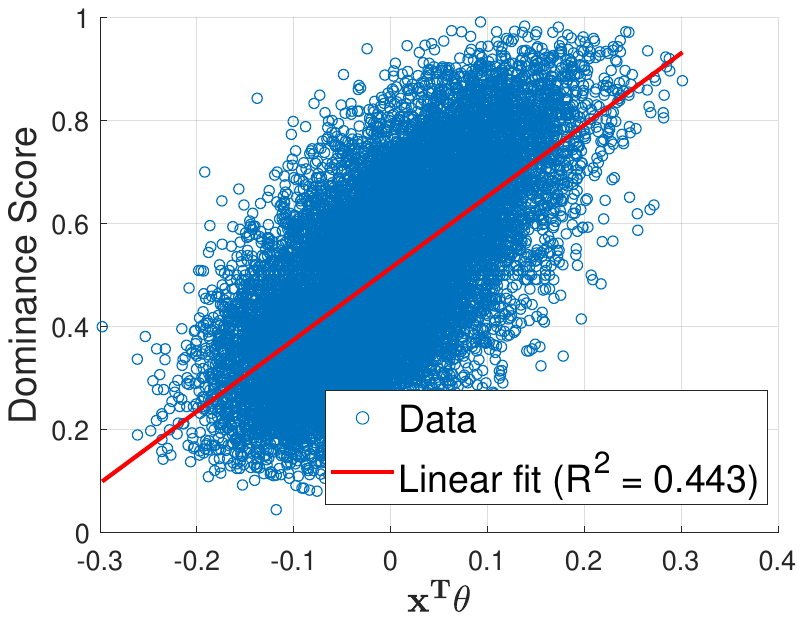} &
            \includegraphics[width=0.24\linewidth , clip, trim=3cm 9cm 3cm 9cm]{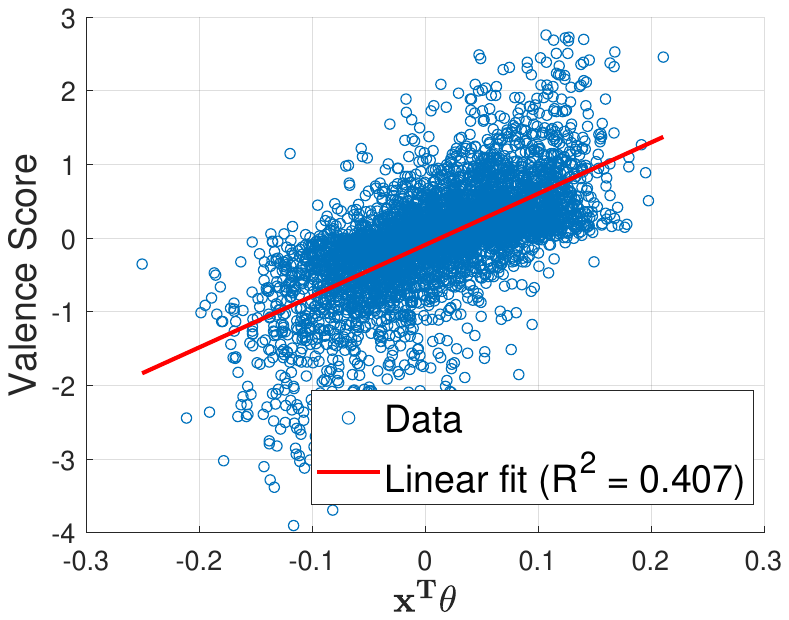} &
            \includegraphics[width=0.24\linewidth , clip, trim=3cm 9cm 3cm 9cm]{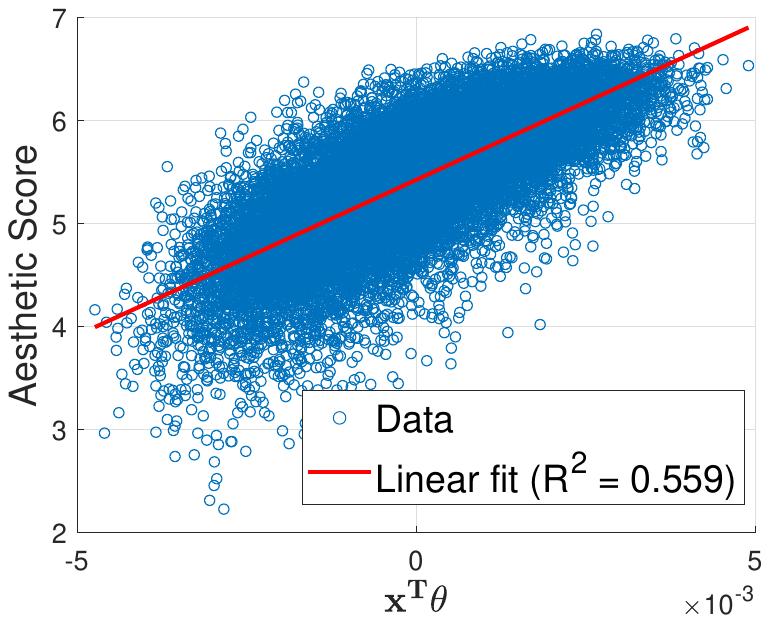} &
            \includegraphics[width=0.24\linewidth, clip, trim=3cm 9cm 3cm 9cm]{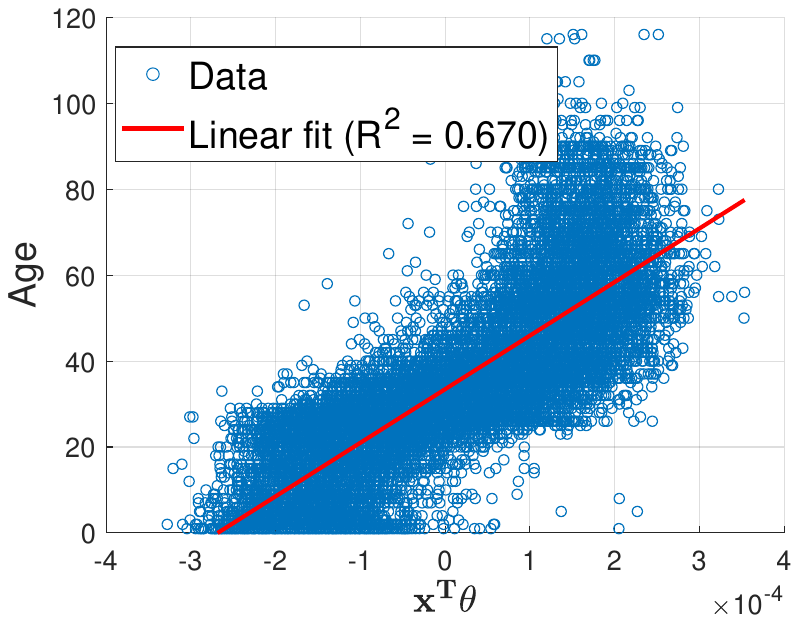} \\  
            a) Word2Vec embedding  & b) Word2Vec embedding  & c) ViT-L/14 embedding & d) InceptionResnetV1 \\
            \quad  for NRC dataset &  for SocialSent dataset & for AVA dataset &  for UTKFace dataset\\
        \end{tabular}
        \caption{Scores for words (a, b) and image (c, d) attributes as a function of the inner product between their pre-defined embedding and the ground truth classifier. We observe there exists an approximately affine relationship. Pre-trained embeddings naturally encode score information as distance from decision boundary, enabling information-rich queries beyond binary labels. }
        \label{fig.:score_vs_distance}
    \end{figure*}   
To leverage rich query types, we develop a quantitative framework comprising three components. First, we construct a human response model that predicts the likelihood of annotator answers to ranking and exemplar selection queries. Second, recognizing that different queries impose varying cognitive loads, we present an active learning algorithm that balances informational value against human effort. Third, we derive variational approximations and greedy heuristics that make Bayesian inference tractable in high-dimensional embedding spaces. Together, these components enable principled optimization over expressive query types while accounting for realistic cost constraints.
\subsection{Human Response Model}
\label{sec.:model}
    Our goal is to learn a decision boundary $\boldsymbol{\theta}$ in an embedding space that accurately predicts the implicit classification rules of human annotators. We focus on unitary norm binary linear classifiers $\boldsymbol\theta \in \mathbb{R}^{d}, \|\boldsymbol{\theta}\|_2 = 1$ that label every embedded item $\mathbf{x}\in \mathbb{R}^d$ in a $d$-dimensional embedding space as $y=\operatorname{sign}(\boldsymbol\theta^T\mathbf{x})$. These linear classifiers generalize to non-linear problems if there exists a mapping from the original non-linear space to a higher dimensional embedding in which the data is linearly separable.
    Given a set of items labeled by the annotator, a common method to learn a linear classifier is \emph{logistic regression}, in which the label probability is given by   
    \begin{equation} \label{eq.likelihood_label}
        \mathbb{P}\left[y=1|\mathbf{x} \right] = \left(1 + \exp{\left( w (\boldsymbol\theta^T \mathbf{x})\right)}\right)^{-1},
    \end{equation}
    with $w \in \mathbb{R}$ representing the inverse of the scale parameter. 
    
    The main drawback of reducing annotators to labelers is that each response conveys at most one bit of information. Consequently, accruing enough information to learn an accurate classifier often requires an impractically large number of queries in data-scarce settings. To mitigate this information bottleneck, we design queries that solicit information beyond binary labels. Not only must these queries be intuitive for human annotators to respond to, but the queries also require a response model that relates item embeddings to the classifier in a manner quantifiable for computational analysis. Our key observation is that off-the-shelf embeddings naturally exhibit geometry that aligns with perceived scores: We expect human annotators to be uncertain when classifying items whose embeddings lie close to the boundary between classes. Conversely, we expect annotators to become more confident as the items lie further away from the boundary, implicitly assigning higher and lower scores as we move in opposite directions from the classification boundary. Figure~\ref{fig.:score_vs_distance} demonstrates that this relationship holds across a variety of embeddings, tasks and datasets. Specifically, we examine popular word and image embeddings \cite{word2vec, Radford2021LearningSupervision, Cao2018VGGFace2:Age} for tasks ranging from word sentiment \cite{Hamilton} and dominance \cite{Mohammad2018} analysis, to image aesthetic perception \cite{Murray2012AVA:Analysis} and age categorization from a face \cite{Zhang2017AgeAutoencoder}. In all cases, \emph{we consistently observe a linear relationship between an item score and the inner product between its embedding and the \ac{MMSE} classifier}. We provide additional details in Appendix \ref{sec:app.LinearRelationship}. Unlike previous work \cite{Liu2018, Canal2019} that learns task-specific embedding spaces, we exploit this naturally occurring linear relationship. This empirical observation motivates us to formalize the relationship between embeddings and scores as follows; \looseness=-1    
    \begin{assumption} \label{assum_humanResponse}
       Annotators associate a score to an item with embedding $\mathbf{x}_i$ as 
        \begin{equation}\label{eq.linear_score}
            \text{score}(\mathbf{x}_i) = a  \mathbf{x}_i^T\boldsymbol\theta + b + \delta_i,
        \end{equation}
        where $\delta_i$ represents the noise associated with $\mathbf{x}_i$, with a query dependent extreme value distribution. The scalars $a$ and $b$, which describe the affine relationship, are dataset and attribute dependent.
    \end{assumption}       

    The score model enables us to derive probability distributions of human responses for a large variety of queries. We model the noise in Equation~\eqref{eq.linear_score} so that it leads to the Boltzmann choice model used in behavioral economics \cite{Georgii2011GibbsTransitions}. We validate the distribution choices empirically in Appendix~\ref{sec:app.Gumbel}. Consider the question $q_{\text{high}} = $ ``Select and label the item with highest score;'' and the noise distribution $\delta_i \sim \operatorname{Gumbel-Max}(\mu, \sigma)$. In that case, the probability that the $i$-th item is selected from a set $\mathcal{S} = \{\mathbf{x}_k\}_{k=1}^{|\mathcal{S}|}$ is
    \begin{align}\label{eq.likelihood_qhigh}
        \mathbb{P}[i | \mathcal{S}, \boldsymbol{\theta}, q_{\text{high}}] &= \mathbb{P}[ \text{score}(\mathbf{x}_i)>\text{score}(\mathbf{x}_j), \forall j\neq i] \nonumber\\
        & =\mathbb{P}[\delta_{j}-\delta_{i} <a  (\mathbf{x}_i-\mathbf{x}_j)^T\boldsymbol\theta, \forall j\neq i] \nonumber\\
        & = \frac{\exp{(\frac{a}{\sigma}\mathbf{x}_i^T\boldsymbol\theta)}}{\sum_{\mathbf{x}\in\mathcal{S}}\exp{(\frac{a}{\sigma}\mathbf{x}^T\boldsymbol\theta)}},
    \end{align}
    because this is a logit choice probability \cite{Train2003}. Analogously, for the question $q_{\text{low}} = $``Select and label the item with lowest score'' and noise distribution $\delta_i \sim \operatorname{Gumbel-Min}(\mu, \sigma)$, the probability that the annotator selects the $i$-th item is \looseness=-1
    \begin{align}\label{eq.likelihood_qlow}
        \mathbb{P}[i | \mathcal{S}, \boldsymbol{\theta}, q_{\text{low}}] 
        & = \frac{\exp{(-\frac{a}{\sigma}\mathbf{x}_i^T\boldsymbol\theta)}}{\sum_{\mathbf{x}\in\mathcal{S}}\exp{(-\frac{a}{\sigma}\mathbf{x}^T\boldsymbol\theta)}}.
    \end{align}    

    We extend the model to ranking queries $q_{\text{rank}} = $ ``Rank the items from highest to lowest score and indicate which is the last positive example in the ranked list.'' Given a set of items $\mathcal{S} = \{\mathbf{x}_k\}_{k=1}^{|\mathcal{S}|}$, let $\mathbf{r} = [r_1, ..., r_{|\mathcal{S}|}]$ represent the permutation that orders these items, where $\mathbf{x}_{r_j}$ is the item ranked at position $j$. We model the annotator's ranking as a sequential selection process:
    \begin{align}\label{eq.likelihood_qrank}
        \mathbb{P}\left[\mathbf{r} | \mathcal{S}, \boldsymbol{\theta}, q_{\text{rank}}\right] 
        & = \prod_{j=1}^{|\mathcal{S}|-1} \mathbb{P}\left[r_j | \mathcal{R}_j, \boldsymbol{\theta}, q_{\text{high}}\right]\nonumber\\
        & =\prod_{j=1}^{|\mathcal{S}|-1}  \frac{\exp{(\frac{a}{\sigma}\mathbf{x}_{r_j}^T\boldsymbol\theta)}}{\sum_{\mathbf{x}\in \mathcal{R}_j}\exp{(\frac{a}{\sigma}\mathbf{x}^T\boldsymbol\theta)}},
    \end{align}    
    where $\delta_i \sim \operatorname{Gumbel-Max}(\mu, \sigma)$ and $\mathcal{R}_j = \mathcal{S} \setminus \{ \mathbf{x}_{r_1}, ..., \mathbf{x}_{r_{j-1}}\} $. This response model is known as the Plackett-Luce model \cite{luce1959individual, plackett1975analysis}. By asking the annotators to mark the ``last positive example'' in the ordered list, we obtain a threshold $\ell\in\{0, 1, ..., |\mathcal{S}|\}$ that implicitly captures the labels for all items. Namely, if $\ell=0$ all items will receive a negative label. Otherwise, all items in positions $i\leq \ell$ receive a positive label, while the remaining items receive a negative label.

    The queries $q_{\text{high}} $  and $q_{\text{low}} $ request a listwise choice, gathering up to $\log_2 |\mathcal{S}|$ more bits of information per query than a traditional binary label. The ranking query $q_{\text{rank}} $ receives a full ordering over $\mathcal{S}$ and complete labeling further alleviating the information gain bottleneck. Table \ref{tab:query_comparison} summarizes the differences between the proposed rich queries and the traditional labeling query.

\begin{table*}[t]
\centering
\caption{Comparison of Query Types for Human-in-the-Loop Learning}
\label{tab:query_comparison}
\begin{tabular}{@{}lcccc@{}}
\toprule
\textbf{Query Type} & \textbf{Outcome} & \textbf{Information} & \textbf{Expected Response} & \textbf{Interactions to}  \\
                    & \textbf{Space $(N)$} & \textbf{per Query (bits)} & \textbf{Time (s)} & \textbf{75\% Accuracy} \\
\midrule
Label only & $2$ & $\leq 1$ & $4.37$ & $1310$  \\
\midrule
Label + Selection & $2|\mathcal{S}|$  & $ \leq 1 + \log_2|\mathcal{S}|$ & $4.01 + 0.63|\mathcal{S}|$ & $468$ \\
($q_{\text{high}}, q_{\text{low}}$) &  &($|\mathcal{S}|=4$: $\sim$3 bits) & ($|\mathcal{S}|=4$: 6.5s) & ($|\mathcal{S}|=4$) \\
\midrule
Label Threshold + Ranking& $(|\mathcal{S}|+1)!$  & $\leq \log_2(|\mathcal{S}+1|!)$ & $-0.32 + 4.41|\mathcal{S}|$ & 191 \\
($q_{\text{rank}}$) & & ($|\mathcal{S}|=4$: $\sim$4.3 bits) & ($|\mathcal{S}|=4$: 17.3s) & ($|\mathcal{S}|=4$)  \\
\bottomrule
\end{tabular}
\begin{tablenotes}
\small
\item Information per query represents the theoretical maximum mutual information $I(\theta; \mathcal{O}|q, S)$. Expected response times are modeled from crowdsourced experiments with human participants on word sentiment classification tasks (Section III-C). Performance metric (75\% accuracy) is based on word sentiment classification experiments with \textbf{active} word selection (Figure 3d) \looseness=-1.
\end{tablenotes}
\end{table*}

\subsection{Question Selection}\label{sec:question_selection}

 To minimize sample complexity, we wish to select the question and set size such that the information gained from the query $I(\boldsymbol\theta;o|q, |\mathcal{S}|)$ is maximized. In practice, feasible queries are often constrained by cognitive and interface limitations \cite{tiferet2025constraints}. Under our response model, ranking queries subsume exemplar selection queries; thus, they provide strictly more information. Therefore, to minimize sample complexity, when the domain allows it, we should select $q_{\text{rank}}$. We also empirically observe that actively selecting between the questions $q_{\text{high}}$ and $q_{\text{low}}$ does not have a significant impact on performance, so when asking selection queries, we alternate between both uniformly at random. Moreover, we find that the information gain is increasing with $|\mathcal{S}|$; thus, to minimize sample complexity, we select the largest feasible $|\mathcal{S}|$.

In many applications, however, the primary objective is to minimize a cost different from sample complexity, such as cognitive effort or total response time, and not all queries incur the same cost \cite{SettlesBurrandCravenMarkandFriedland2008, Kapoor2007SelectiveLearning, Canal2020}. To balance informativeness against response burden, we propose selecting queries that maximize the \emph{information rate}, i.e., the expected bits of information gained per unit cost:
 \begin{equation}\label{eq:rate}
     R = \frac{I(\boldsymbol\theta;o|q, |\mathcal{S}|)}{\mathbb{E}[\operatorname{cost}(q, |\mathcal{S}|)]}.
 \end{equation}
 Note that to maximize this rate, we need a model of the cost for each feasible question and item size combination, which we estimate in Section~\ref{sec:human_cost_model} for a word classification task. \looseness=-1

\subsection{Algorithm}\label{sec:algorithm}
We propose an online machine learning algorithm to learn a binary classifier from human feedback. Figure~\ref{fig.:BasicBlockDiagram} illustrates the principle of our approach. At each iteration $t$, the \emph{item selector} chooses the item set $\mathcal{S}$ such that the expected annotator response to the preselected question $q$ is as informative as possible, i.e., the set $\mathcal{S}$ that maximizes the mutual information between the underlying classifier and the annotator response. Next, the annotator answers the query with $o_t$. In the case of $q_{\text{high}}$ or $q_{\text{low}}$, the answer is an item and its corresponding label $o_t = (i_t, y_t)$; in the case of a ranking question $q_{\text{rank}}$, the answer involves an item ordering and threshold separating positive from negative items $o_t = (\mathbf{r}, \ell)$. The \emph{classifier estimator} collects the response from the annotator and leverages this information to update the estimator of the classifier $\boldsymbol{\theta}$. The posterior $\mathbb{P} [\boldsymbol{\theta}| \mathcal{F}_{t}]$, where $\mathcal{F}_t = \{o_k, q_k, \mathcal{S}_k\}_{k=1}^t$ denotes the history, is updated using Bayes rule to leverage the likelihood functions in Equations (\ref{eq.likelihood_label}), (\ref{eq.likelihood_qhigh}), (\ref{eq.likelihood_qlow}) and (\ref{eq.likelihood_qrank}). The algorithm continues querying the oracle until the uncertainty is sufficiently reduced. We measure uncertainty as the determinant of the posterior covariance matrix $|\boldsymbol{\Sigma}_t|$, which quantifies the volume of the uncertainty region, and we terminate when $|\boldsymbol{\Sigma}_t| \leq \epsilon^d$, where $\epsilon\in\mathbb{R}$ is the per dimension threshold. The approach is summarized in Algorithm~\ref{algo.:ideal}.

\begin{figure}[bh]
    \centering
    \begin{tikzpicture}[
        block/.style={rectangle, draw, thick, text width=2.5cm, minimum height=1cm, align=center},
        subblock/.style={rectangle, fill=white, draw, thick, text width=2.8cm, minimum height=1.9cm, align=center},
        circleblock/.style={circle, draw, thick, align=center, inner sep=2pt},
        node distance=0.3cm and 0.5cm,
    ]
    
        \node[subblock, fill=red!30] (ex-selector) {Item Selector \\ $\underset{\mathcal{S} 
        }{\mathrm{argmax}}  \ I(\boldsymbol{\theta};o|\mathcal{S}, q)$};
        \node[subblock, fill=red!30] (learner-estimator) [below=of ex-selector, yshift=-0.2cm] {Classifier Estimator \\ $p_{t} = \mathbb{P}[\boldsymbol\theta|\mathcal{F}_t]$};
    
        \begin{scope}[on background layer]
            \node[draw, dashed, red, fill=red!10, fit=(ex-selector) (learner-estimator), inner sep=0.3cm, label=above:Learning Algorithm] (ML_box) {};
        \end{scope}
    
       \node[right=2.6cm of ML_box.north east, anchor=north] (human-icon) {\includegraphics[scale=0.36]{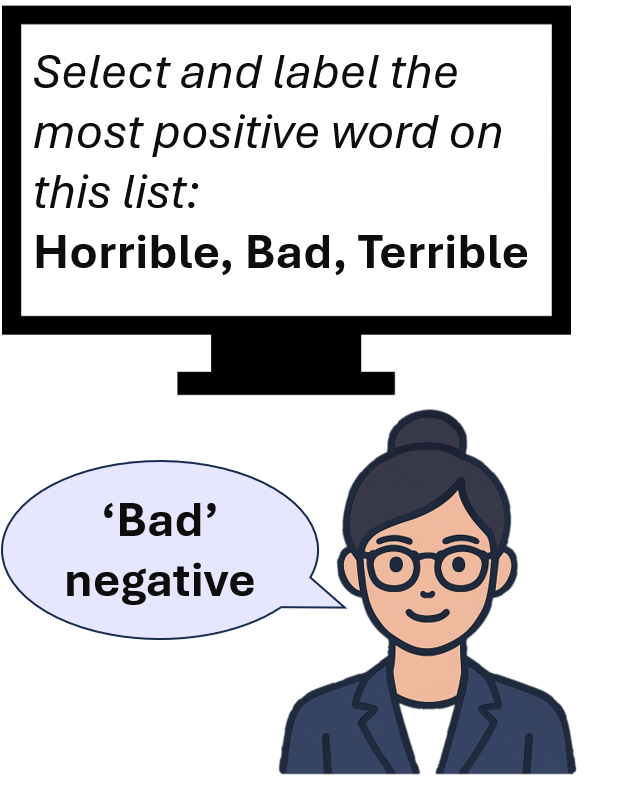}};
    
        \draw[->, thick] (ex-selector.east) -- ++ (0.6,0) -- node[above, pos=0.1] {$
        \mathcal{S}_t$} ($(human-icon.west) + (0.1,1.32)$);
        
         \draw[->, thick] ($(human-icon.west) + (0.1,-1.2)$) -- node [above, pos=0.35] {$o_t$} ($(learner-estimator.east)+ (0,-0.1)$);
        
        \draw[->, thick] (learner-estimator.north) -- ++(0,0.2) -| (ex-selector.south)
            node[near start, left] {$p_t$};
    
        \node[circleblock, left=of ex-selector] (theta) {$\mathcal{X}$};
        \node[circleblock, left=of learner-estimator] (prior) {$p_0$};

        \draw[->, thick] (theta.east) -- (ex-selector.west);
        \draw[->, thick] (prior.east) -- (learner-estimator.west);
    
    \end{tikzpicture}
    \caption{Block diagram for human-in-the-loop learning for sentiment word classification. At each interaction, the human annotator receives the query with items that maximize the information gain about the ground truth classifier $\boldsymbol{\theta}$. In the example, we ask the annotator to select a word from a list, and provide its label. The answer to the query is used to update the estimator of the classifier and select the next query items.}
    \label{fig.:BasicBlockDiagram}
\end{figure}

    \begin{algorithm}
        \caption{Ideal Human-in-the-Loop Learning (HiLL)} 
        \label{algo.:ideal}
        \begin{algorithmic}[1]
            \State \textbf{Input:} $\mathcal{X},  \mathbf{q}, |\mathcal{S}|, \mathbb{P} [\boldsymbol{\theta}| \mathcal{F}_{0}] , \epsilon^d$
            \State $t = 0$
            \While {$\left|\boldsymbol\Sigma_{t}\right| > \epsilon^d $}
            \State $\mathcal{S}_t \leftarrow \underset{
            \mathcal{S} \in \binom{|\mathcal{X}|}{|\mathcal{S}|}}{ \operatorname{argmax}}\  \mathbb{E
            }
            \left[I(\boldsymbol\theta;o|q_t, \mathcal{S}, \mathcal{F}_{t-1}) \right]$ 
            \State $o_t \leftarrow$ human's response to the query 
            \State $\mathbb{P} [\boldsymbol{\theta}| \mathcal{F}_{t}] = \frac{ \mathbb{P}\left[o_t| \boldsymbol\theta, q_t, \mathcal{S}_t \right]  \mathbb{P} [\boldsymbol{\theta}| \mathcal{F}_{t-1}]}{\mathbb{P}\left[o_t| q_t, \mathcal{S}_t \right]} $ 
            \State $t =t+1$
            
        \EndWhile
        \end{algorithmic} 
    \end{algorithm}

Unfortunately, Algorithm~\ref{algo.:ideal} is intractable in high dimensional settings, because the Bayesian update lacks a closed-form solution and the set of possible items grows combinatorially. The next subsections provide approximations to make the computations feasible. The corresponding tractable implementations are described in Algorithm~\ref{algo.:approx_selection} and Algorithm~\ref{algo.:approx_ranking}.

\subsubsection{Approximation of Belief Update}
    \begin{algorithm}[btp]
        \caption{Approximate HiLL for Selection} 
        \label{algo.:approx_selection}
        \begin{algorithmic}[1]
            \State \textbf{Input:} $\mathcal{X},|\mathcal{S}|, \boldsymbol{\mu}_0, \Sigma_0, \epsilon^d, N$ 
            \State $t = 1$
            \While {$\left|\boldsymbol\Sigma_{t}\right| > \epsilon^d $}
            \State $q_t \leftarrow$ sample uniformly from $\{q_{\text{high}}, q_{\text{low}}\}$            
            \State $S_t \leftarrow $item\_set\_selection$( q_t, \mathcal{X}, |\mathcal{S}|, \boldsymbol{\mu}_{t-1}, \Sigma_{t-1} , N)$     
            
            \State $i_t, y_t \leftarrow$ human response to the query 
            \State $\boldsymbol\mu_t, \Sigma_t \leftarrow$ belief\_update$(\mathcal{S},i_t, y_t, \boldsymbol{\mu}_{t-1}, \boldsymbol{\Sigma}_{t-1})$
            
            \State $t = t + 1$
            
        \EndWhile
        \end{algorithmic} 
    \end{algorithm}

    \begin{algorithm}[btp]
        \caption{Approximate HiLL for Ranking} 
        \label{algo.:approx_ranking}
        \begin{algorithmic}[1]
            \State \textbf{Input:} $\mathcal{X}, |\mathcal{S}|, \boldsymbol{\mu}_0, \Sigma_0, \epsilon^d, N$ 
            \State $t = 1$
            \While {$\left|\boldsymbol\Sigma_{t}\right| > \epsilon^d $}
            \State $S_t \leftarrow $item\_set\_selection$(q_{\text{rank}}, \mathcal{X}, |\mathcal{S}|, \boldsymbol{\mu}_{t-1}, \Sigma_{t-1}, N)$             
            \State $\mathbf{r}_t, l_t \leftarrow$ human response to the query 
            \For{$i = 1, 2, ..., |\mathcal{S}|$}
                \State $y \leftarrow 1$ if $i\leq l$, else $-1$
                \State $\boldsymbol\mu_t, \Sigma_t \leftarrow$ belief\_update$(\mathcal{S},r_i, y, \boldsymbol{\mu}_{t-1}, \boldsymbol{\Sigma}_{t-1})$
                
            \EndFor
            
            \State $t = t + 1$
            
        \EndWhile
        \end{algorithmic} 
    \end{algorithm}

    \begin{algorithm}[btp]
        \caption{belief\_update} 
        \label{algo.:belief_update_approx}
        \begin{algorithmic}[1]
            \State \textbf{Input:} $\mathcal{S},i_t, y_t, \boldsymbol{\mu}_{t-1}, \boldsymbol{\Sigma}_{t-1}, w, K$
            \State $\boldsymbol\mu_t, \Sigma_t \leftarrow \boldsymbol\mu_{t-1}, \Sigma_{t-1} $
            \While{$\boldsymbol{\mu}_t, \Sigma_t$ not converged}
                \While{$\boldsymbol{\mu}_t, \Sigma_t$ not converged}
                    \State $\xi^2 = w^2\mathbf{x}^T\boldsymbol\Sigma_{t}\mathbf{x} + w^2(\mathbf{x}^T\boldsymbol\mu_{t})^2$
                    \State $\boldsymbol\Sigma_{t}^{-1} =\boldsymbol\Sigma_{t-1}^{-1}+2 \frac{\tanh(\xi/2)}{4\xi} w^2\mathbf{x}\mathbf{x}^T$
                    \State $\boldsymbol\mu_{t}  =\boldsymbol\Sigma_{t}\left[\boldsymbol\Sigma_{t-1}^{-1} \boldsymbol\mu_{t-1}+\left(y-\frac{1}{2}\right) w\mathbf{x}\right]$
                \EndWhile

                \State $\boldsymbol{\mu}_t, \Sigma_t \leftarrow \underset{\boldsymbol\mu_q, \boldsymbol\Sigma_q}{\mathrm{argmin}}  \operatorname{KL}\left(\mathcal{N}(\boldsymbol\mu_q, \boldsymbol\Sigma_q)||\mathcal{N}(\boldsymbol\mu_t, \boldsymbol\Sigma_t)\right) $                
                \Statex \quad \quad \quad \quad \quad \quad \quad  $- K\mathbf{x}_i^T \boldsymbol\mu_q  $
                \Statex \quad \quad \quad \quad \quad \quad $\ \  + \log {\sum_{j=1}^{|\mathcal{S}|}\exp \left(K \mathbf{x}_j^T\boldsymbol \mu_q + \frac{1}{2} \mathbf{x}_j^T \boldsymbol\Sigma_q \mathbf{x}_j\right)}$

            \EndWhile
            \State \textbf{Output:} $\boldsymbol{\mu}_t, \Sigma_t $
        \end{algorithmic} 
    \end{algorithm}

    \begin{algorithm}[btp]
        \caption{item\_set\_selection} 
        \label{algo.:item_set_selection}
        \begin{algorithmic}[1]
            \State \textbf{Input:} $q, \mathcal{X}, |\mathcal{S}|, \boldsymbol{\mu}, \Sigma, N,  w, K$ 
            \State $\mathcal{S} \leftarrow \{\}$ 
            \For{$i = 1, 2, ..., |\mathcal{S}|$}
                \State $\{\widehat{\boldsymbol{\theta}}_n\}_{n=1}^N \leftarrow$ sample i.i.d. from $\mathcal{N}(\boldsymbol{\mu}, \Sigma)$
                
                \State $\mathbf{s} \leftarrow $ select item from dataset with Eq. (\ref{eq.word_selection})
                \State $\mathcal{S} \leftarrow \{\mathcal{S} , \mathbf{s}\}$ 
            \EndFor
            \State \textbf{Output:} $\mathcal{S}$
            
        \end{algorithmic} 
    \end{algorithm}

As line 6 of Algorithm~\ref{algo.:ideal} indicates, we use a Bayesian approach to update the belief of the classifier given the latest observation. When the question is $q_t \in \{ q_{\text{high}}, q_{\text{low}}\} $ and the answer is $o_t = (i_t, y_t)$, the posterior is given by
\begin{align*}
    \mathbb{P} [\boldsymbol{\theta}| \mathcal{F}_{t}]= \frac{\mathbb{P}[i_t| \boldsymbol{\theta},  \mathcal{S}_t, q_t] \mathbb{P}[y| \mathbf{x}_{i}, \boldsymbol{\theta}]}{\mathbb{P}[i_t, y_t| \mathcal{S}_t, q_t, \mathcal{F}_{t-1}]} \mathbb{P}[\boldsymbol{\theta}| \mathcal{F}_{t-1}],
\end{align*}
where $\mathcal{F}_0$ is the empty set $\emptyset$.

The likelihood functions are not conjugates of the prior, so no analytical closed form expression exists to compute the posterior. Although, \ac{BBVI} \cite{Ranganath2014BlackInference} is commonly used to approximate the posterior, our closed-form derivation of the variational updates for this setting, which we describe next, provides a lower variance and computationally cheaper approximation.

We approximate the classifier's density function as a Gaussian distribution $\boldsymbol{\theta} \sim \mathcal{N}(\boldsymbol\mu, \boldsymbol{\Sigma})$. We may then compute the posterior mean and variance given an item label by an iterative process \cite{Jaakkola2000} described in lines 4 to 7 of Algorithm~\ref{algo.:belief_update_approx}.
In a similar fashion, we approximate the classifier posterior given the item selected with a variational approach. We look for the variational distribution $q(\boldsymbol\theta) \sim \mathcal{N}(\boldsymbol\mu_q, \boldsymbol\Sigma_q)$ closest, in terms of the \ac{KL} distance, to the true posterior. This is equivalent to finding the distribution that maximizes the \ac{ELBO} \cite{KevinP.Murphy2022ProbabilisticTopics}
\begin{multline} \label{eq.ELBO}
    \text{ELBO}(q) 
 =   - \operatorname{KL}(q(\boldsymbol\theta) \| p(\boldsymbol\theta)) + \mathbb{E}_{\boldsymbol\theta \sim q} \left[K\mathbf{x}_i^T \boldsymbol\theta \right]  \\ -\mathbb{E}_{\boldsymbol\theta \sim q} \left[\log {\sum_{j=1}^{|\mathcal{S}|}\exp \left(K\mathbf{x}_j^T \boldsymbol\theta\right)} \right],
\end{multline}
where $\mathbf{x}_i$ is the embedding of the item selected by the human, $K = \frac{a}{\sigma}$ for $q_{\text{pos}}$ queries or $K = -\frac{a}{\sigma}$ for $q_{\text{neg}}$ queries, and the prior is $p(\boldsymbol\theta)\sim \mathcal{N}(\boldsymbol\mu_p, \boldsymbol\Sigma_p)$.
The first term in Equation (\ref{eq.ELBO}) is the \ac{KL} divergence between two Gaussian distributions,
\begin{align*}
    \operatorname{KL}(q||p) =&\frac{1}{2}\left[\log\frac{|\Sigma_p|}{|\Sigma_q|} -d+ \boldsymbol{\mu_q}^T\Sigma_p^{-1}\boldsymbol{\mu_q} +\boldsymbol{\mu_p}^T\Sigma_p^{-1}\boldsymbol{\mu_p}\right]\\ &  -\boldsymbol{\mu_q}^T\Sigma_p^{-1}\boldsymbol{\mu_p} + \frac{1}{2}tr\left\{\Sigma_p^{-1}\Sigma_q\right\}.
\end{align*}
Because of the linearity property of the expectation, we compute the second term in Equation (\ref{eq.ELBO}) as
\begin{equation*}
    \mathbb{E}_{\boldsymbol\theta \sim q} \left[K\mathbf{x}_i^T \boldsymbol\theta \right] = K\mathbf{x}_i^T \boldsymbol\mu_q.
\end{equation*}
The third term in Equation~(\ref{eq.ELBO}) has no closed-form solution, but following \cite{Braun2007VariationalChoice}, we apply Jensen's inequality to obtain an upper bound
\begin{multline*}
    \mathbb{E}_{\boldsymbol\theta \sim q} \left[\log {\sum_{j=1}^{|\mathcal{S}|}\exp \left(K \mathbf{x}_j^T\boldsymbol \theta\right)} \right]\\
     \leq \log {\sum_{j=1}^{|\mathcal{S}|}\exp \left(K \mathbf{x}_j^T\boldsymbol \mu_q + \frac{1}{2} K^2\mathbf{x}_j^T \boldsymbol\Sigma_q \mathbf{x}_j\right)}.
\end{multline*}
We approximate the posterior distribution given the item selection as a Gaussian distribution whose mean and covariance are obtained by maximizing the \ac{ELBO} lower bound
\begin{align} \label{eq.update_given_item}
    \{\boldsymbol\mu, \boldsymbol\Sigma\} 
    & = \underset{\boldsymbol\mu_q, \boldsymbol\Sigma_q}{\mathrm{argmin}}  \operatorname{KL}(q||p) - K\mathbf{x}_i^T \boldsymbol\mu_q   \nonumber\\
    & \quad + \log {\sum_{j=1}^{|\mathcal{S}|}\exp \left(K \mathbf{x}_j^T\boldsymbol \mu_q + \frac{1}{2} K^2 \mathbf{x}_j^T \boldsymbol\Sigma_q \mathbf{x}_j\right)}.
\end{align}
Putting these together, Algorithm~\ref{algo.:belief_update_approx} approximates the posterior by accounting for both the label and item selection. We first update the posterior according to the label received. Then we update the posterior according to the selected item with Equation (\ref{eq.update_given_item}). We repeat both updates until convergence. 

When $q_t = q_{\text{rank}}$, we compute the posterior as a recursion of $q_{\text{high}}$ queries such that
\begin{multline*}
\mathbb{P}[\boldsymbol{\theta}\mid \mathcal{F}_{t}] =
\prod_{j=1}^{|\mathcal{S}|}
\frac{
\mathbb{P}[r_j\mid \boldsymbol{\theta},\mathcal{R}_j,q_{\text{high}}]\,
\mathbb{P}[y_t\mid \mathbf{x}_{r_j},\boldsymbol{\theta}]
}{
\mathbb{P}\bigl[r_j,y_j \bigm| \mathcal{R}_j,\allowbreak q_{\text{high}},\allowbreak \mathcal{F}_{t-1},\allowbreak \{(r_k,y_k)\}_{k=1}^{j-1}\bigr]
}
\\
{}\times \mathbb{P}[\boldsymbol{\theta}\mid \mathcal{F}_{t-1}],
\end{multline*}
where $y_j= \mathds{1}[j\leq l_t]$. Said differently, Algorithm~\ref{algo.:belief_update_approx} is applied recursively, starting from the top item in the ranked list.

\subsubsection{Active Learning Heuristic for Item Set Selection}\label{sec:AL_heuristic}
Active learning \cite{Settles2012ActiveLearning, Shekhar2021ActiveAbstention} looks for the most informative items for human annotation, such that the sample complexity is minimized. This implies querying about the items that provide the most information about the ground truth on expectation. However, this maximization, as defined in line 4 of Algorithm~\ref{algo.:ideal}, requires computing the posterior over every possible item set and annotator response. There are combinatorially many options to compare, so even using the belief approximation, this approach is often computationally intractable.

To select the items in the query, we approximate the information gain based on query by committee \cite{KachitesMcCallum1998EmployingClassification}. At each iteration $t$, we sample $N$ particles $\widehat{\boldsymbol\theta}_n \overset{\text{i.i.d}}{\sim} \mathbb{P}[\boldsymbol\theta|\mathcal{F}_{t}]$. We maximize the disagreement between the prediction of each particle and the mean prediction among all particles as,
\begin{align*}
   \mathcal{S}_t = \underset{\mathcal{S} \in \binom{|\mathcal{X}|}{|\mathcal{S}|}}{\operatorname{argmax}}\quad & H\left[\frac{1}{N}\sum_{n=1}^Np_n(o| \mathcal{S})\right]\\ & - \frac{1}{N}\sum_{n=1}^NH\left[ p_n(o| \mathcal{S})\right],
\end{align*}
where $p_n(o| \mathcal{S}):=\mathbb{P}\left[o_t=o \mid \widehat{\boldsymbol{\theta}}_n, q_t, \mathcal{S}\right]$ represents the probability mass function over answers $o \in \mathcal{O}$ for query $q_t$ conditioned on the drawn classifier $ \widehat{\boldsymbol{\theta}}_n$, and $H[p]:= -\sum_{o \in \mathcal{O}} p(o) \log_2 p(o)$ is the Shannon entropy.
The first term in the objective function promotes queries with a high uncertainty of the expected output, which avoids queries for which the answer is predictable and thus not very informative. The second term attempts to minimize uncertainty due to intrinsic noise, for example, discouraging asking labels of neutral words such as ``table'' for which the uncertainty mostly comes from the labeling noise from humans, and not from a lack of exploration.

Actively selecting the item set significantly improves the performance, but there exist combinatorially many sets $\binom{|\mathcal{X}|}{|\mathcal{S}|}$ over which to maximize. To avoid this computational burden, we greedily aggregate a single item
\begin{align} \label{eq.word_selection}
  \mathbf{s} = \underset{\mathbf{s} \in \mathcal{X}}{\operatorname{argmax}}\quad & H\left[\frac{1}{N}\sum_{n=1}^N p_n\left(o| \{\mathcal{S}, \mathbf{s}\}\right)\right]\nonumber\\ & - \frac{1}{N}\sum_{n=1}^NH\left[p_n\left(o| \{\mathcal{S}, \mathbf{s}\}\right)\right]
\end{align}
to the set until we reach size $|\mathcal{S}|$. Algorithm~\ref{algo.:item_set_selection} summarizes the implementation of the active learning heuristic.

\section{Results}\label{sec:results}
\subsection{Theoretical Sample Complexity Bounds}\label{sec:theory}
We estimate the classifier at step $t$ using the unbiased estimator $\widetilde{\boldsymbol\theta}_t = \mathbb{E}[\boldsymbol\theta|\mathcal{F}_{t}]$. Using the arithmetic-geometric mean inequality, we bound this estimator \ac{MSE} as 
\begin{equation}\label{eq.MSE_sigma}
    \operatorname{MSE}_t = \operatorname{trace}(\Sigma_{\boldsymbol\theta|\mathcal{F}_{t}})\geq d|\Sigma_{\boldsymbol\theta|\mathcal{F}_{t}}|^{1/d},
\end{equation}
where $\Sigma_{\boldsymbol\theta|\mathcal{F}_{t}} = \mathbb{E}\left[(\boldsymbol\theta-\widetilde{\boldsymbol\theta}_t)(\boldsymbol\theta-\widetilde{\boldsymbol\theta}_t)^T|\mathcal{F}_t\right]$. Analogously to~\cite{Canal2019}, we observe that a necessary condition to obtain a low \ac{MSE} is for the determinant of the posterior covariance $|\Sigma_{\boldsymbol\theta|\mathcal{F}_{t}}|$ to be low. Next, we bound the expected number of iterations necessary to achieve a low enough posterior covariance. To facilitate our analysis, we first introduce a set of assumptions. These assumptions are not only useful for analytical tractability but also reflect common conditions or simplifications that align with real-world scenarios.

\begin{assumption} \label{assum.:xy_ind_H_given_theta}
    The annotator answer is independent of the history given the classifier, i.e., $p(o_{t} |\boldsymbol\theta, q_t, \mathcal{S}_t, \mathcal{F}_{t-1}) = p(o_{t} |\boldsymbol\theta, q_t, \mathcal{S}_t)$.
\end{assumption}

\begin{assumption} \label{assum.:y_ind_S}
    The label of an item is conditionally independent of the question and the rest of items in the query given the item, i.e., $p(y_t|\mathbf{x}_t, q_t, \mathcal{S}_t) = p(y_t|\mathbf{x}_t)$.
\end{assumption}
While Assumption \ref{assum.:y_ind_S} may not be perfectly accurate in every instance, this simplification is justified because the intrinsic score perceived by a human towards an item is primarily determined by the item itself. 

\begin{assumption} \label{assum.:minI}
    The information gain of the classifier given a human answer is lower bounded by a positive constant $L$. For ranking queries $I(\boldsymbol\theta; \mathbf{r}_t, l_t|\mathcal{F}_{t-1}) \geq L_r >0$, for exemplar selection queries $I(\boldsymbol\theta; i_t, y_t|\mathcal{F}_{t-1}) \geq L_s >0, \forall t$. 
\end{assumption}
This assumption implies that the query pool is sufficiently diverse and that the human feedback is strictly more informative than random noise.

\begin{assumption} \label{assum.:prior}
    The prior distribution of the classifier $p_0$ is uniform over a hypercube $[-M, M]^d$, for some $M>0.5$.
\end{assumption}
We assume a bounded parameter space to ensure realizability. Given this constraint, the uniform distribution is the least informative choice, as it is the unique maximum-entropy prior \cite{Cover2005ElementsTheory}.

Our main result is the following:
\begin{theorem} \label{thm.:Exact_T_bounds}
    Let $T_\epsilon=\min\{t: \left|\boldsymbol\Sigma_{\boldsymbol\theta|\mathcal{F}_t}\right|^{1/d} < \epsilon\}$ be the stopping time of Algorithm~\ref{algo.:ideal}. Under Assumptions \ref{assum_humanResponse} through \ref{assum.:prior}, $ \mathbb{E}[T_{\epsilon}]$ is bounded as
    \begin{equation*}
       \frac{d}{2}\frac{ \log_2 \frac{2M^2}{\pi e \epsilon}}{\log_2 N} \leq  \mathbb{E}[T_{\epsilon}] \leq \frac{d}{2L} \log_2  \frac{e^4d^2 M^2 }{2\sqrt{2}(d+2)\epsilon}  - 1,
    \end{equation*} 
    with $N = (|\mathcal{S}|+1)!$ and $L=L_r$ for ranking queries $q = q_{\text{rank}} $, and with $N = 2|\mathcal{S}| $ and $L=L_s$ for exemplar selection queries $q \in \{ q_{\text{high}}, q_{\text{low}} \}$.
\end{theorem}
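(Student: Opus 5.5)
The plan is to follow the information-theoretic stopping-time argument of~\cite{Canal2019}. We track the differential entropy $h(\boldsymbol\theta|\mathcal{F}_t)$ and relate it to $|\boldsymbol\Sigma_{\boldsymbol\theta|\mathcal{F}_t}|$ in two directions. The first is the Gaussian maximum-entropy bound $h(\boldsymbol\theta|\mathcal{F}_t)\le \frac12\log_2\big((2\pi e)^d|\boldsymbol\Sigma_{\boldsymbol\theta|\mathcal{F}_t}|\big)$. The second is a reverse inequality that holds because the support is the compact cube $[-M,M]^d$ of Assumption~\ref{assum.:prior}. The entropy starts at $h_0=d\log_2(2M)$ and decreases by $I(\boldsymbol\theta;o_t|\mathcal{F}_{t-1})$ in expectation at each step. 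Assumption~\ref{assum.:xy_ind_H_given_theta} makes the chain rule valid, so that $h_0-\mathbb{E}[h(\boldsymbol\theta|\mathcal{F}_T)]=\mathbb{E}\sum_{t\le T} I_t$ for the stopped process. This identity follows from Wald's identity or optional stopping applied to the martingale $h(\boldsymbol\theta|\mathcal{F}_t)+\sum_{k\le t}I_k$.

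\emph{Lower bound.} Each answer takes at most $N$ values. For ranking there are $|\mathcal{S}|!$ orderings and $|\mathcal{S}|+1$ thresholds, which we bound by $(|\mathcal{S}|+1)!$; for selection there are $|\mathcal{S}|$ items and two labels. Hence $I_t\le H(o_t)\le\log_2N$, and so $\mathbb{E}[T_\epsilon]\log_2N\ge h_0-\mathbb{E}[h(\boldsymbol\theta|\mathcal{F}_{T_\epsilon})]$. At stopping, the Gaussian bound together with $|\boldsymbol\Sigma|^{1/d}<\epsilon$ gives $h(\boldsymbol\theta|\mathcal{F}_{T_\epsilon})\le \frac d2\log_2(2\pi e\epsilon)$. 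Substituting yields $\frac d2\log_2\frac{4M^2}{2\pi e\epsilon}=\frac d2\log_2\frac{2M^2}{\pi e\epsilon}$, which is exactly the stated left-hand side.

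\emph{Upper bound.} Assumption~\ref{assum.:minI} gives $\mathbb{E}[\sum_{t\le T}I_t]\ge L\,\mathbb{E}[T]$. Hence $L\,\mathbb{E}[T_\epsilon]\le h_0-\mathbb{E}[h(\boldsymbol\theta|\mathcal{F}_{T_\epsilon})]$, and it remains to lower bound the entropy at the last step before stopping, where $|\boldsymbol\Sigma|^{1/d}\ge\epsilon$. This is the main obstacle. We need a lower bound on the entropy of a distribution supported in a bounded set in terms of its covariance determinant, and maximum entropy does not supply one. The first thing to try is a bound for log-concave posteriors. The prior is uniform on a convex set and the likelihoods in~\eqref{eq.likelihood_label}--\eqref{eq.likelihood_qrank} are log-concave in $\boldsymbol\theta$, so every posterior is log-concave. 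For an isotropic log-concave density one has $h\ge \frac12\log_2|\boldsymbol\Sigma| - c\,d$, which comes from bounding $\|f\|_\infty\le e^{d}\,\cdots$, for instance $\|f\|_\infty\le (e^2 d^2/(d+2)\ldots)^{d/2}|\boldsymbol\Sigma|^{-1/2}$. I expect the factors $e^4d^2/(2\sqrt2(d+2))$ in the statement to come from such a sup-norm bound, with the $M^2$ and the $d/(d+2)$ coming from the second moment of the uniform distribution on a ball or cube. The bound is applied at time $T_\epsilon-1$. Combining it with $h_0$ gives $L(\mathbb{E}[T_\epsilon]-1)\le \frac d2\log_2\frac{(\text{const})M^2}{\epsilon}$, so $\mathbb{E}[T_\epsilon]\le 1+\frac{d}{2L}\log_2(\cdot)$. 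The $-1$ in the statement, rather than $+1$, suggests that the one-step overshoot is absorbed by a slightly different indexing of $T_\epsilon$, and the bookkeeping of the final step has to be adjusted to match.

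Finally, the two query types differ only through $N$ and $L$. We substitute $N=(|\mathcal{S}|+1)!$, $L=L_r$ for $q_{\text{rank}}$, and $N=2|\mathcal{S}|$, $L=L_s$ for $q_{\text{high}},q_{\text{low}}$. Assumption~\ref{assum.:y_ind_S} is used only to factor the answer likelihood into item and label parts when counting outcomes and verifying log-concavity.
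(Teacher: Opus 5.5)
Your lower bound is the paper's argument: the chain rule, then $I(\boldsymbol\theta;o_t\mid\mathcal{F}_{t-1})\le\log_2 N$ by counting $2|\mathcal{S}|$ or $|\mathcal{S}|!\,(|\mathcal{S}|+1)=(|\mathcal{S}|+1)!$ outcomes, then Gaussian maximum entropy at $T_\epsilon$. Your optional-stopping justification for evaluating at the random time is more careful than the paper's.

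Your upper bound also has the paper's skeleton. It uses optional stopping for $U_t=Z_t/L-t$ with $Z_t=-h(\boldsymbol\theta;\mathcal{F}_t)$, the initial value $Z_0=-d\log_2 2M$, log-concavity of the posterior (Lemma~\ref{lemma:LCC}), and \cite[Lemma 3.1]{Canal2019} at time $T_\epsilon-1$. There $|\boldsymbol\Sigma|^{1/d}\ge\epsilon$, so $Z_{T_\epsilon-1}\le\frac d2\log_2\frac{e^4d^2}{8\sqrt2(d+2)\epsilon}$. Adding $d\log_2 2M=\frac d2\log_2 4M^2$ gives exactly the argument of the logarithm in the statement. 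So the $M^2$ is just the initial entropy, not a second-moment computation, and the $d$-dependent constant is imported wholesale from that lemma.

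The genuine gap is the last step. Optional stopping gives $L\,\mathbb{E}[T_\epsilon]\le h_0-\mathbb{E}[h(\boldsymbol\theta;\mathcal{F}_{T_\epsilon})]$, but the reverse entropy bound is only available at $T_\epsilon-1$. Your inequality $L(\mathbb{E}[T_\epsilon]-1)\le h_0-\mathbb{E}[h(\boldsymbol\theta;\mathcal{F}_{T_\epsilon-1})]$ amounts to Wald's identity at $T_\epsilon-1$, which is not a stopping time. The decision to stop depends on the last answer, so the final increment is not an ordinary one-step change. You must bound the overshoot $\mathbb{E}[(h(\boldsymbol\theta;\mathcal{F}_{T_\epsilon-1})-h(\boldsymbol\theta;\mathcal{F}_{T_\epsilon}))^+]$ explicitly. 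The first half of the proof of Lemma~\ref{lemma:entropy_dif} bounds this drop pointwise by $\gamma/2$, via the likelihood floor of Corollary~\ref{corollary:likelihood}. With that ingredient, your route proves the stated logarithmic term plus $\gamma/(2L)$, not plus $1$.

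The paper obtains its $-1$ from the asserted inequality $\mathbb{E}[Z_{T_\epsilon}]\le\mathbb{E}[Z_{T_\epsilon-1}]-L$, attributed to Assumption~\ref{assum.:minI} and Lemma~\ref{lemma:LCC}. That inequality says the entropy \emph{rises} by at least $L$ on the final step. This is the reverse of what Assumption~\ref{assum.:minI} gives; Lemma~\ref{lemma:supermartingale} itself uses $\mathbb{E}[Z_i\mid Z^{i-1}]\ge Z_{i-1}+L$. Even the correct direction would bound $Z_{T_\epsilon}$ from below, whereas an upper bound is needed. So no re-indexing will reproduce the $-1$, and your suspicion is well founded. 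The sound version of the upper bound carries an additive overshoot term in its place.
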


\begin{proof}
    Let $h(\boldsymbol\theta; \mathcal{F}_{t}):= \mathbb{E}_{\boldsymbol\theta| \mathcal{F}_{t}}[-\log \mathbb{P}(\boldsymbol\theta| \mathcal{F}_{t})]$ be the entropy of the posterior distribution after observing $t$ interactions.
    For the lower bound, we note that Algorithm~\ref{algo.:ideal} selects the query deterministically as a function of its latest classifier estimator, i.e., $I(\boldsymbol{\theta}; q_t, \mathcal{S}_t| \mathcal{F}_{t-1}) = 0$, so that
    \begin{align}\label{eq:chain_rule}
        \mathbb{E}_{\mathcal{F}_{t}} \left[h(\boldsymbol\theta; \mathcal{F}_{t})\right]  &= h(\boldsymbol\theta; \mathcal{F}_{0}) - \sum_{j=1}^t I(\boldsymbol{\theta}; o_j, q_j, \mathcal{S}_j| \mathcal{F}_{j-1}) \nonumber\\
         &=  \begin{aligned}[t]
        d \log_2 2M - \sum_{j=1}^tI(\boldsymbol{\theta}; q_j, \mathcal{S}_j| \mathcal{F}_{j-1})& \\
        - \sum_{j=1}^tI(\boldsymbol{\theta}; o_j| q_j, \mathcal{S}_j, \mathcal{F}_{j-1})&
    \end{aligned} \nonumber \\
         &\geq 
          d \log_2 2M - t\log_2 N.
    \end{align}
    The first equality follows from the chain rule of mutual information, the second equality follows from Assumption \ref{assum.:prior}, and the last inequality holds because $I(\boldsymbol{\theta}; o_t| q_t, \mathcal{S}_t, \mathcal{F}_{t-1})$ is maximized for uniform outcomes.  
    Equation (\ref{eq:chain_rule}) implies 
    \begin{equation}\label{lowerbound_entropy}
        \mathbb{E}[T_\epsilon] \geq \frac{d \log_2 2M - \mathbb{E} \left[h(\boldsymbol{\theta}; \mathcal{F}_{T_\epsilon})\right] }{\log_2 N}.
    \end{equation}
    
    As Gaussian distributions maximize entropy for a given covariance \cite[Theorem 8.6.5]{Cover2005ElementsTheory}, so that 
    \begin{equation}\label{eq.lemma31_delta}
        h(\boldsymbol\theta; \mathcal{F}_{T_{\epsilon}}) \leq \frac{d}{2}\log 2\pi e|\Sigma_{\boldsymbol\theta|\mathcal{F}_{T_{\epsilon}}}|^{1/d} \leq \frac{d}{2}\log 2\pi e \epsilon.
    \end{equation}
    Combining Equation (\ref{lowerbound_entropy}) and (\ref{eq.lemma31_delta}), we obtain
    \begin{align*}
        \mathbb{E}[T_{\epsilon}] \geq \frac{d}{2}\frac{ \log_2 \frac{2M^2}{\pi e \epsilon}}{\log_2 N}.      
    \end{align*}

    To obtain the upperbound, we define the random variable $U_t := \frac{Z_t}{L} - t$, where $Z_t := -h(\boldsymbol\theta; \mathcal{F}_{t})$. From Lemma \ref{lemma:supermartingale} we know $U_t$ is a submartingale that fulfills the conditions of the optional stopping theorem \cite{williams1991probability}, so that
    \begin{multline}\label{eq.stopping_thm}
        \frac{\mathbb{E}[Z_{T_{\epsilon}}]}{L} - \mathbb{E}[T_{\epsilon}] \geq \frac{\mathbb{E}[Z_0]}{L} - \mathbb{E}[0]\\
        \iff \frac{\mathbb{E}[Z_{T_{\epsilon}}] - \mathbb{E}[Z_0]}{L}  \geq  \mathbb{E}[T_{\epsilon}].
    \end{multline} 
    By Assumption \ref{assum.:prior} the initial entropy is
    \begin{equation} \label{eq.Z_0}
        Z_0 = - h(\boldsymbol\theta; \mathcal{F}_{0}) = -d \log_2 2M.
    \end{equation}
    From Assumption \ref{assum.:minI} and Lemma \ref{lemma:LCC},
    \begin{equation}\label{eq.Z_minus_L}
        \mathbb{E}[Z_{T_\epsilon}] \leq \mathbb{E}[Z_{T_{\epsilon-1}}] - L.
    \end{equation}
    Since $\mathbb{P}(\boldsymbol\theta| \mathcal{F}_t)$ is log-concave, we invoke \cite[Lemma 3.1.]{Canal2019} 
    to obtain
        \begin{align} \label{eq.Z_T}
        Z_{T_{\epsilon -1}} &= -h(\boldsymbol\theta; \mathcal{F}_{T_{\epsilon -1}}) \leq  - \frac{d}{2}\log_2  \frac{2|\Sigma_{\boldsymbol\theta | \mathcal{F}_{T_{\epsilon -1}}}|^{1/d}}{e^4d^2 /(4\sqrt{2}(d+2))} \nonumber\\
        & \leq - \frac{d}{2}\log_2  \frac{8\sqrt{2}(d+2)\epsilon}{e^4d^2  }\nonumber\\
        & = \frac{d}{2}\log_2  \frac{e^4d^2  }{8\sqrt{2}(d+2)\epsilon}. 
    \end{align}
    Substituting Equations (\ref{eq.Z_0}), (\ref{eq.Z_minus_L}) and (\ref{eq.Z_T}) into Equation (\ref{eq.stopping_thm}), we obtain the desired upper bound of the expectation
    \begin{align*}
        \mathbb{E}[T_{\epsilon}] & \leq \frac{1}{L} \left( \mathbb{E}[Z_{T_{\epsilon}-1}] - L + d \log_2 2M\right)\\
        & \leq \frac{d}{2L} \left( \log_2  \frac{e^4d^2  }{8\sqrt{2}(d+2)\epsilon} + \log_2 4M^2\right) - 1 \\
        & \leq \frac{d}{2L} \log_2  \frac{e^4d^2 M^2 }{2\sqrt{2}(d+2)\epsilon}  - 1.
    \end{align*}
    
\end{proof}
    Theorem \ref{thm.:Exact_T_bounds} shows that the estimator uncertainty $\epsilon$ decays exponentially with the number of queries $\mathbb{E}[T_{\epsilon}]$. This result extends the upper bound in \cite{Canal2019} to non-equiprobable queries. The parameter $M$ appears because we allow for more freedom in the prior. The more constrained the prior is, i.e., the lower $M$, the fewer interactions we need. In contrast to the lower bound of \cite{Canal2019}, we generalize beyond binary queries; thus, the denominator $\log_2 N$ appears in the lower bound, suggesting a lower stopping time may be possible as the query complexity increases. We empirically corroborate the dependence of the stopping time on the question type and item set size in the next section. 
    The lower and upper bounds on the stopping time are monotonically increasing with $d$, which is consistent with the intuition that more interactions are required to learn classifiers in higher-dimensional embedding spaces. Crucially, both $N$ and $L$ are larger for ranking queries than for exemplar selection queries. Consequently, both bounds on the stopping time are lower for ranking queries, theoretically confirming that their higher information content necessitates fewer user interactions. 

\subsection{Empirical Sample Complexity Reduction}\label{sec:emp_sample_complexity}

\begin{figure*}[htp]
    \centering
    \includegraphics[width=0.67\linewidth]{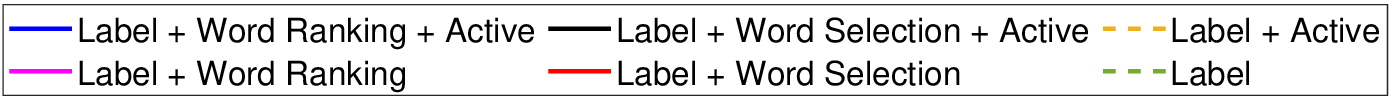}
    
    \begin{tabular}{cc}
        \includegraphics[width=0.36\linewidth]{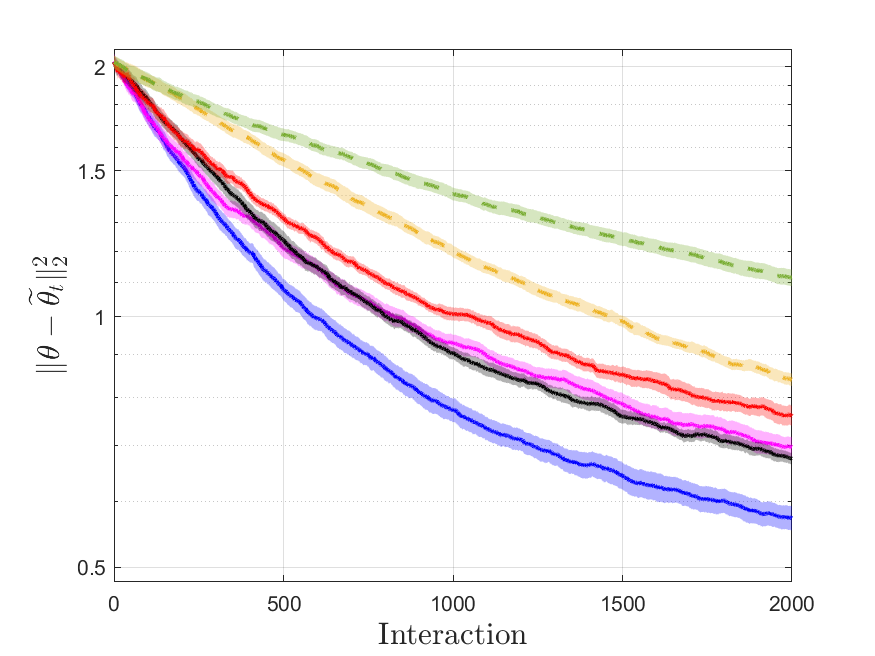} &
        \includegraphics[width=0.36\linewidth]{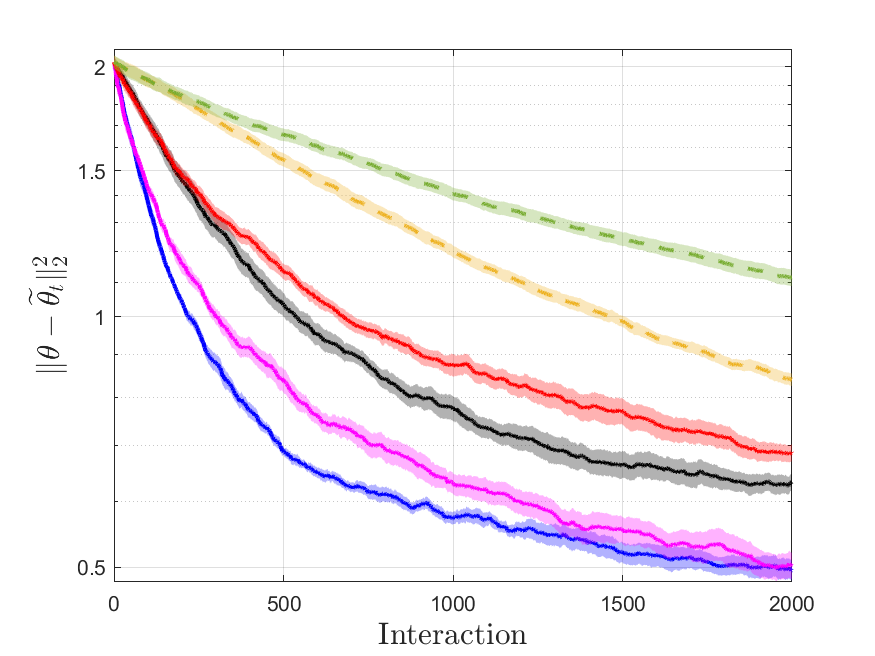} \\
        a) MSE vs. interaction for $|\mathcal{S}|=2$ & b) MSE vs. interaction for $|\mathcal{S}|=4$\\
        \includegraphics[width=0.36\linewidth]{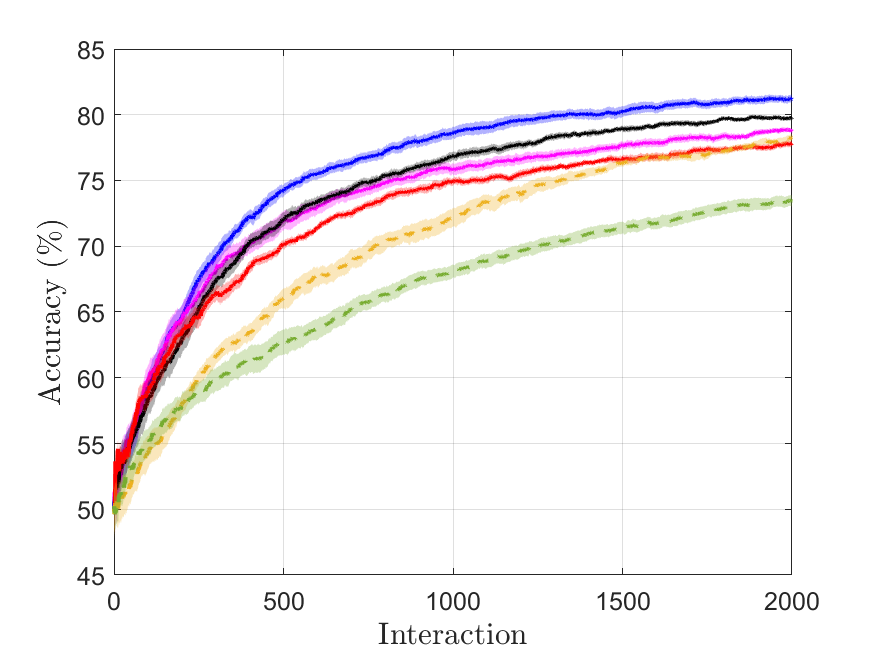} &
        \includegraphics[width=0.36\linewidth]{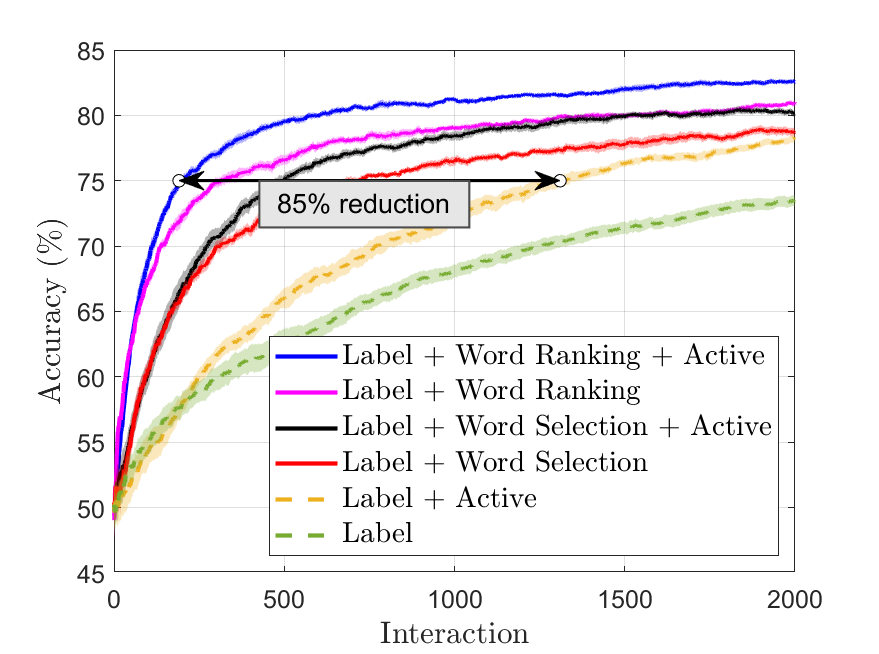}\\
        c) Accuracy vs. interaction for $|\mathcal{S}|=2$ & d) Accuracy vs. interaction for $|\mathcal{S}|=4$\\
     \end{tabular}
    \captionof{figure}{Performance of the human in the loop learning algorithms with human data on the word sentiment analysis task. All configurations are run with 10 different random initializations. The lines represent the mean of those experiments, while the shaded areas represent the standard error. Adding word selection or ranking to the queries together with actively selecting the word set reduces the number of iterations needed to achieve a good performance.}
    \label{fig.:wordSentimentResults}
\end{figure*}

\begin{figure}[htp]
    \centering
    \includegraphics[width=0.78\linewidth]{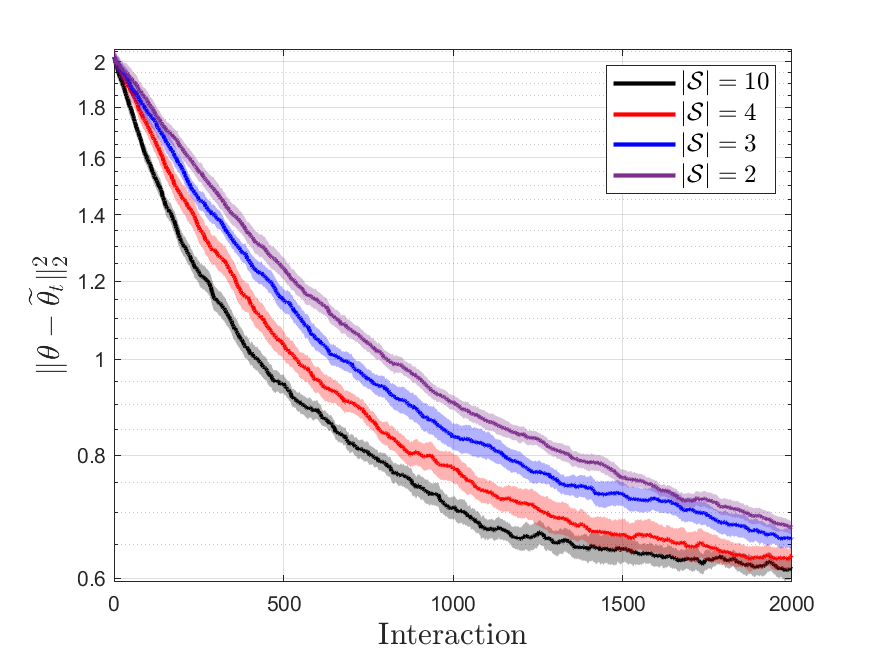}
    \caption{Performance of Algorithm~\ref{algo.:approx_selection} with human data on word sentiment classification. The larger the word set, the faster the decrease of MSE, as suggested by the lower bound in Theorem \ref{thm.:Exact_T_bounds}. }
    \label{fig.:MSE_vs_S}
\end{figure}

We empirically validate Algorithms \ref{algo.:approx_selection} and \ref{algo.:approx_ranking} on word and image classification tasks with existing crowdsourced datasets. To facilitate further exploration and validation by the research community, we provide the code for replicating our experiments\footnote{\url{https://github.com/BelenMU/HiTL-SentimentClassify/}} \cite{code}. 

\subsubsection{Word Sentiment Classification}
We first focus on the binary word sentiment classification task \cite{Nasukawa2003SentimentProcessing}. While humans can intuitively label words according to their connotation as positive (e.g., healthy) or negative (e.g., scary) \cite{Rana2018AApplications}, justifying the categorization in machine-interpretable terms proves challenging for most. We test the performance of our algorithms in learning this implicit knowledge from humans. We use the list of most frequent words in the decade of the 2000s
 \cite{Hamilton}. For every word $w$, we simulate the implicit human score by sampling from $\mathcal{N}({\mu}_{w}, \sigma_{w}^2)$, where ${\mu}_{w}$ and $ \sigma_{w}^2$ are the mean and variance of the valence score as given by the dataset \cite{Hamilton}. This simulation approximates a realistic distribution of human sentiment scores for each word and captures inter-subject variability in valence assessments.

We use existing 300-dimensional monolingual word embeddings \cite{word2vec} to which we prepend a 1; this way, the parameter $\boldsymbol\theta$ accounts for both the direction and the offset of the hyperplane characterizing the classifier. We define the ground-truth classifier $\boldsymbol{\theta} \in \mathbb{R}^{301}$ as the hyperplane that minimizes the labeling error over all words in the dataset. Figures~\ref{fig.:wordSentimentResults}a and \ref{fig.:wordSentimentResults}b show how the distance from the estimator to the ground truth decreases as more queries are collected. All the Bayesian strategies lead to an \ac{MSE} reduction, but the convergence speed markedly differs. The richer the query, the faster the decrease of the error: $q_{\text{rank}}$ outperforms $q_{\text{high}}$ and $q_{\text{low}}$, which in turn outperform traditional labeling queries. Additionally, actively choosing the items in the queries boosts performance. 
In fact, word selection between two actively chosen items surpasses ranking two randomly selected words. 
Algorithms~\ref{algo.:approx_selection} and \ref{algo.:approx_ranking} thus facilitate a faster reduction in \ac{MSE}, effectively reducing sample complexity.

The lower bound in Theorem~\ref{thm.:Exact_T_bounds} suggests that larger word sets should accelerate learning; we observe this behavior empirically when comparting Figures~\ref{fig.:wordSentimentResults}a and \ref{fig.:wordSentimentResults}b. Figure~\ref{fig.:MSE_vs_S} confirms this effect for word selection queries: after 1000 iterations, the \ac{MSE} is about 20\% lower when the annotator chooses from 10 options instead of 2. 

Beyond \ac{MSE}, we also assess the word classification accuracy. Given a word $w$ with embedding $\mathbf{x}_w$, we report the prediction $\operatorname{sign}(\widetilde{\boldsymbol{\theta}}^T\mathbf{x}_w)$ is accurate when it matches its label $\operatorname{sign}(\mu_w)$. To measure the predictor $\widetilde{\boldsymbol{\theta}}$ accuracy, we consider the words with $|P[Y_w = 1] - 0.5| = |\int_{0}^\infty f_{\mathcal{N}}(\mu_w, \sigma_w^2)-0.5| \leq 0.1$, where $f_{\mathcal{N}}$ represents the probability density function of a normal distribution. This range 
is selected to avoid words that have a completely neutral score, like ``branch'' or ``mouth.'' We focus on words with stronger sentiment scores, for which the prediction accuracy of our algorithm can be most meaningfully assessed.

Figures \ref{fig.:wordSentimentResults}c and \ref{fig.:wordSentimentResults}d show how classification accuracy evolves with the number of interactions.  As a baseline, using only randomly chosen labeling queries leads to a slow increase in accuracy, requiring more than 2000 interactions to reach 75\% accuracy. When labels are collected using active learning, the same accuracy is achieved after about 1300 labeling queries. Allowing the annotator to also select the most positive or most negative word among four candidates yields further gains: with randomly chosen word sets, 75\% accuracy is reached in roughly 700 interactions, dropping to about 500 interactions when the word sets are chosen actively. Ranking queries $q_{\text{rank}}$ offer the most substantial acceleration, requiring just $315$ rankings of four random words or $196$ rankings of actively chosen words to match the $75\%$ accuracy. This represents a notable reduction of approximately $85\%$ in the number of interactions needed compared to active labeling queries. These results demonstrate that our method not only improves estimator alignment metrics but also enhances the performance of the downstream classification task. Our results confirm the efficiency and practicality of Algorithms~ \ref{algo.:approx_selection} and \ref{algo.:approx_ranking} for valence classification. \looseness=-1

\subsubsection{Image Aesthetic Classification}
We further evaluate our approach on the task of binary image aesthetic classification. While humans naturally perceive the beauty of an image, automatically quantifying this aesthetic quality remains a challenging computational task \cite{anwar2021survey}. To test the efficiency of our algorithms in learning this subjective quality, we utilize the Aesthetic Visual Analysis (AVA) dataset \cite{Murray2012AVA:Analysis}. We focus on a subset of over 20,000 landscape images, which we embed into a 768 dimensional space with CLIP \cite{Radford2021LearningSupervision} and prepend a one. Each image is associated with a distribution of scores between 1 and 10 collected from an online photography community
. We define the ground-truth binary labels by using the dataset median mean score as a threshold $\tau = 5.58$, creating two balanced classes.  To simulate human evaluations, we leverage the crowdsourced score distributions: for every queried image, we sample a score from its empirical distribution and deterministically map these scores to query responses. Namely, we compare the sampled scores against the threshold $\tau$ to simulate human labeling; we select the image with the maximum or minimum sampled score to answer $q_{\text{high}}$ and $q_{\text{low}}$, respectively; and we order the images according to their sampled scores to simulate a human ranking. Figure~\ref{fig:results_AVA} shows how the classification accuracy of the learned classifier evolves as more feedback is gathered. Consistent with our previous experiments, richer queries require substantially fewer interactions to achieve a given accuracy. In particular, $q_{\text{rank}}$ yields the fastest accuracy gains, followed by the selection queries $q_{\text{high}}$ and $q_{\text{low}}$, while traditional labeling leads to the slowest improvement.  In fact, solely labeling does not reach a 65\% accuracy within 2000 interactions. In contrast, augmenting labels with image selection or ranking reduces the number of required interactions to approximately 400 and 900, respectively. Similarly, achieving a 63\% accuracy with $q_{\text{rank}}$ requires $84\%$ less interactions than traditional labeling. \looseness=-1

\begin{figure}
    \centering
    \includegraphics[width=0.78\linewidth]{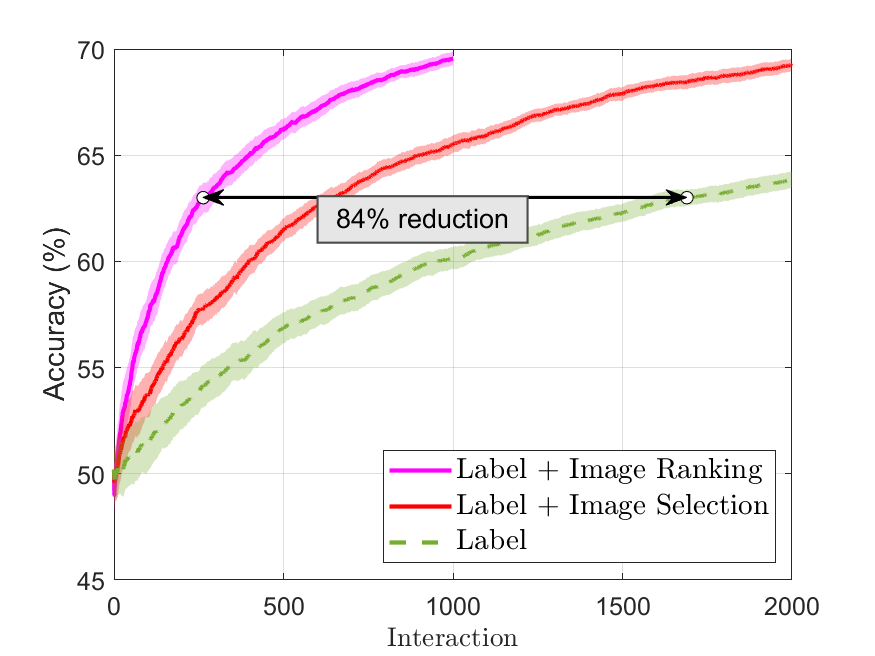}
    \caption{Performance of the human in the loop learning algorithms with human data on the image aesthetic classification task across 10 initializations. There are $|\mathcal{S}| = 4$ candidate images for ranking and selection questions. The accuracy increases faster when asking richer queries.   }
    \label{fig:results_AVA}
\end{figure}


\subsection{Empirical Time Savings in Cost Aware Query Selection}\label{sec:emp_time}
\subsubsection{Human Cost Model}\label{sec:human_cost_model}
To accurately maximize the information rate defined in Equation~\eqref{eq:rate}, we require a model of the human cost. Since the data collection costs are typically driven by collection times \cite{clancy2012active, sigurdsson2016much}, we define the cost as the expected time in seconds required for an annotator to answer a query. This response time depends on both the question type $q$ (ranking vs. selection) and the set size $|\mathcal{S}|$. 

We estimated this cost model in the context of the word sentiment classification task using crowdsourced experiments. We recruited participants through Prolific\footnote{The study was categorized as minimal risk research qualified for exemption status by the Institutional Review Board (IRB).} and recorded their response times for different combinations of question type and set size. Because we expect word selection questions $q_{\text{low}}$ and $q_{\text{high}}$ to incur similar times, we restricted data collection for selection queries to $q_{\text{high}}$. Each participant answered 20 $q_{\text{rank}}$ queries and 20 $q_{\text{high}}$ queries (in addition to 5 gold-standard questions per question type, used as attention checks and excluded from the analysis). We followed A/B testing guidelines and randomly assigned annotators to start with either the word selection or the word ranking queries. After excluding the participants who failed the attention checks, data from 83 participants remained for selection queries and 63 for ranking queries. Note that there is substantial overlap, most participants contributed to both question types. The graphic user interfaces used to record response times are shown in Figures \ref{fig:interface_selection} and \ref{fig:interface_ranking}. Additional details of the study may be found in Section~\ref{app:human_response_model} of the supplemental material. 

\begin{figure}
    \centering
    \includegraphics[width=0.82\linewidth]{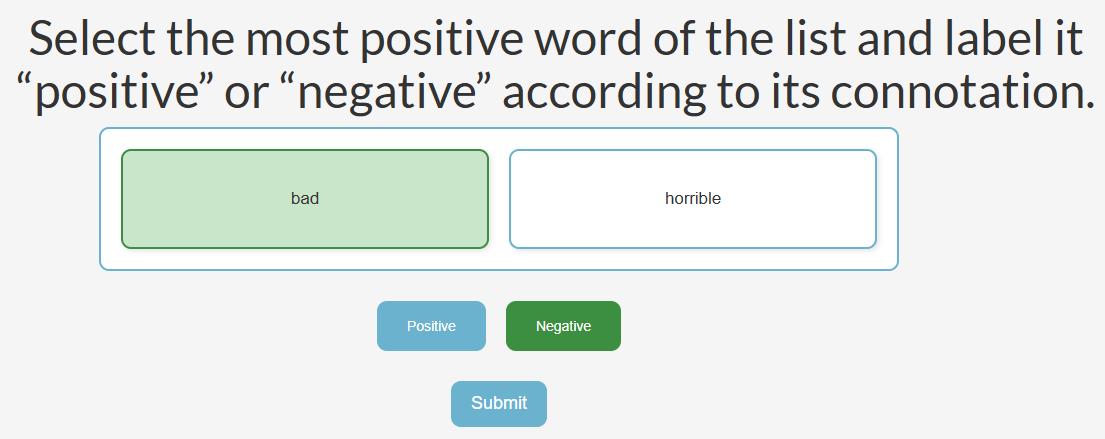}
    \caption{User interface for selection query. User must select the most positive word in the list, in this case ``bad'' and label the word according to its connotation, in this case ``negative''.}
    \label{fig:interface_selection}
\end{figure}
\begin{figure}
    \centering
    \includegraphics[width=0.82\linewidth]{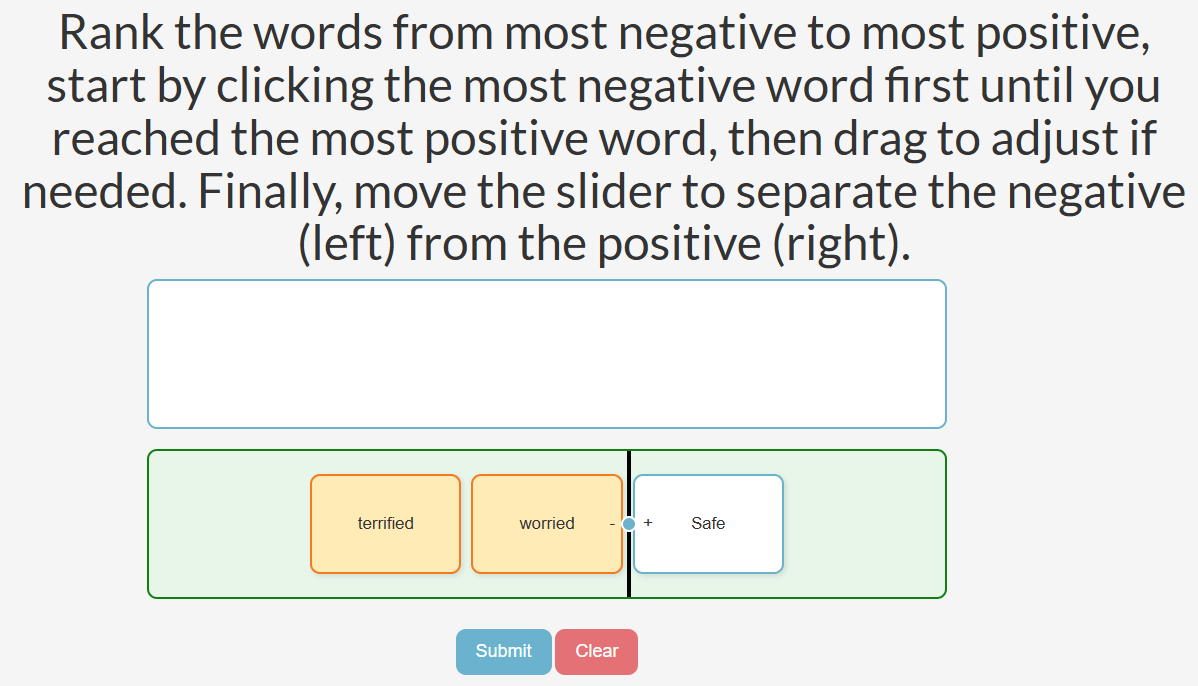}
    \caption{User interface for ranking query. The user must rank the words from the most negative word in the list ``terrified'' to the most positive ``safe'', and then use the vertical bar to separate words with positive and negative connotations.}
    \label{fig:interface_ranking}
\end{figure}

To select an appropriate parametric form for response time, we compared candidate models using a Vuong closeness test, a likelihood-ratio based test for comparing non-nested models. See Appendix~\ref{app:DataAnalysis} for further details on the candidate models and statistical results. Under our experimental conditions, a linear model $\widehat{t} = \beta_0 + \beta_1 |\mathcal{S}|$ describes the time response to selection queries significantly better than a logarithmic model $\widehat{t} = \beta_0 + \beta_1 \log|\mathcal{S}|$ ($p\text{-value} = 10^{-5}$). Similarly, the response times collected for ranking queries are significantly better explained by a linear model than by a purely quadratic model $\widehat{t} = \beta_0 + \beta_1 |\mathcal{S}|^2$ ($p\text{-value} = 4 \times 10^{-5}$). Motivated by these results, we fitted linear models with least-squares regression on individual response times and modeled the human response times as
\begin{align*}
    \widehat{t}_{\text{high}} = \widehat{t}_{\text{low}} &= 4.01 + 0.63 |\mathcal{S}| \text{  and  }
    \widehat{t}_{\text{rank}} = -0.32 + 4.41 |\mathcal{S}|.
\end{align*}
Figure~\ref{fig:linear_fit} shows the fitted predictions against the empirical mean response times for each set size. The linear models capture the overall increase in response time with set size. Although substantial trial‑to‑trial variability remains, this is typical in human response‑time data. As expected, the slope for ranking queries is steeper than for word selection queries, indicating that adding items to a set incurs a substantially higher time penalty when participants must produce a full ranking and labeling rather than identify and label a single most positive word from the set.

\begin{figure}
    \centering
    \vspace{-1pt}
    \includegraphics[width=0.78\linewidth]{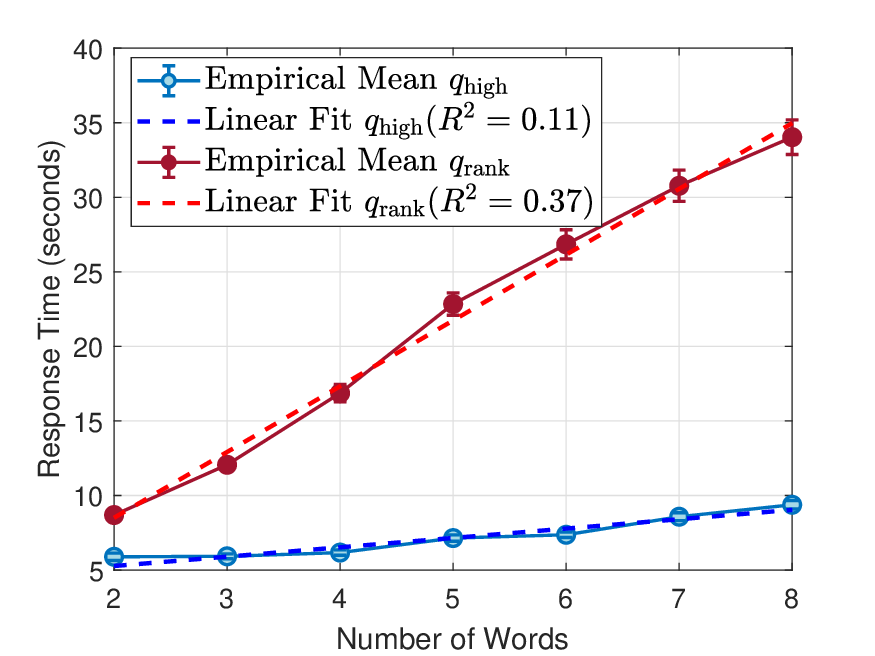}.
    \caption{Mean response time from crowdsourced experiments and linear model for different question types and word set sizes. The points show the empirical mean of the data, and the bars show the standard error among all the combined response times collected. We observe that the linear models closely track the empirical means across word set sizes.}
    \label{fig:linear_fit}
\end{figure}

    \begin{figure}
        \centering
        \includegraphics[width=0.81\linewidth, clip, trim=3cm 9cm 3cm 9cm]{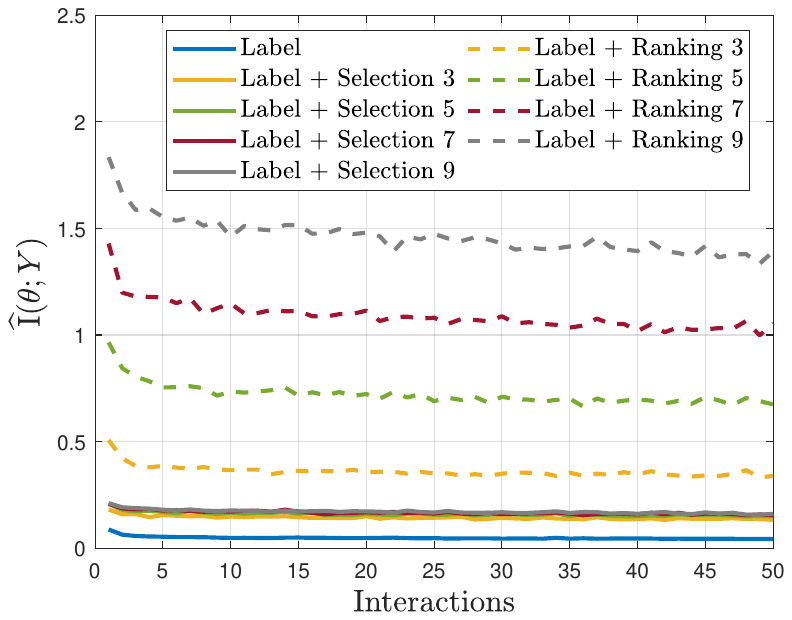}
        \caption{Estimated information gain for several query types. We empirically observe that the ratio of information gain between query types stays approximately constant with interactions.}
        \label{fig:Iratio_constant}
    \end{figure}

    \begin{figure}
        \centering
        \includegraphics[width=0.81\linewidth, clip, trim=3cm 9cm 3cm 9cm]{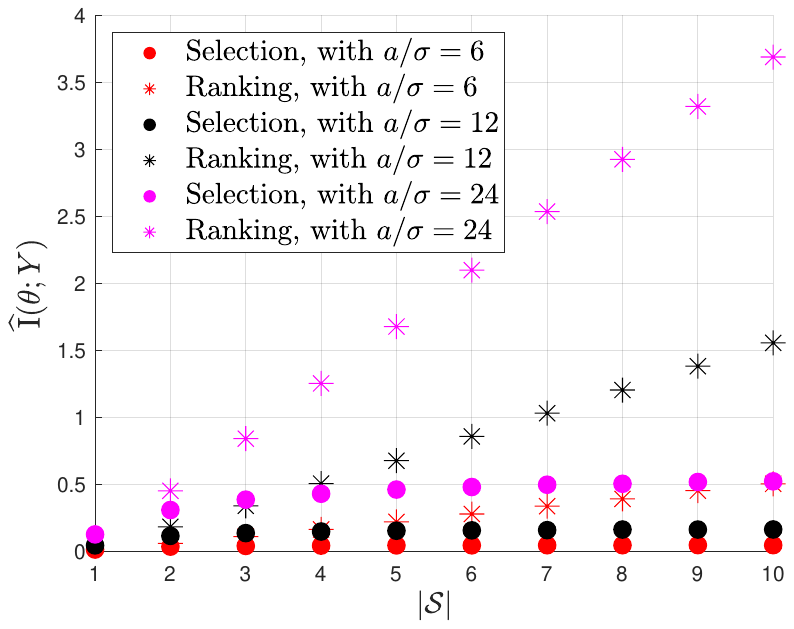}
        \caption{Estimated information gain for different slope and noise factor ratios~$a/\sigma$. }
        \label{fig:IG_NoiseFactor}
    \end{figure}
   \begin{figure*}
        \centering
        \hspace*{-0.35cm}\begin{tabular}{cc}
            \includegraphics[width=0.38\linewidth, clip, trim=3cm 9cm 3cm 9cm]{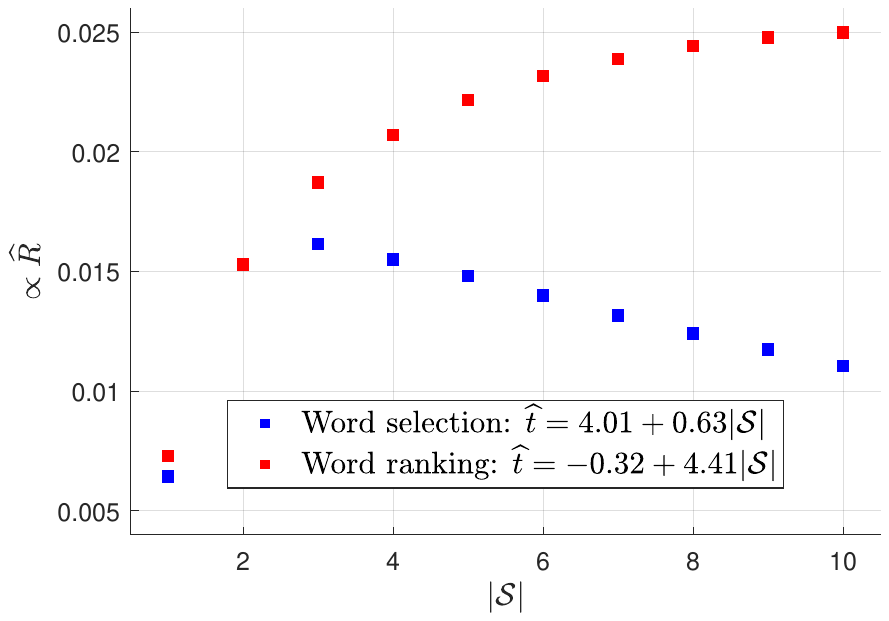} &
            \includegraphics[width=0.38\linewidth, clip, trim=3cm 9cm 3cm 9cm]{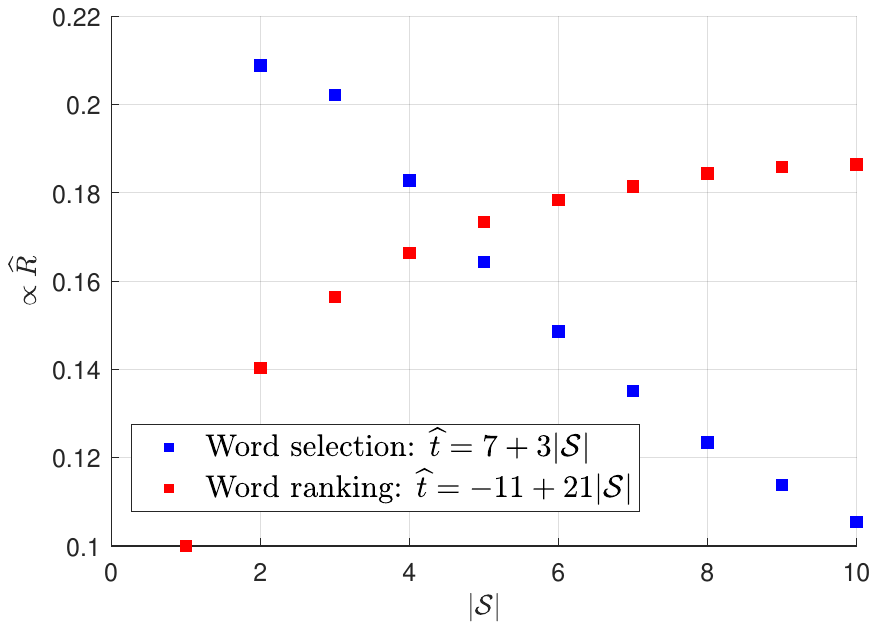} \\
            a) Response time modeled from experiments& b) Synthetic response time model\\
        \end{tabular}
        \caption{Predicted ratio of information gain over response time on the word sentiment classification task for different response time models. To maximize the rate with the interface  we tested, we should query ranking questions with 10 words. However, if an interface resulted on response time model in (b) we should query selection questions with 2 words. \looseness=-1} 
        \label{fig.:rate_vs_S}
        \vspace{\intextsep}
        \hspace*{-0.35cm}\begin{tabular}{cc}
            \includegraphics[width=0.4\linewidth, clip, trim=3cm 9cm 4cm 9cm]{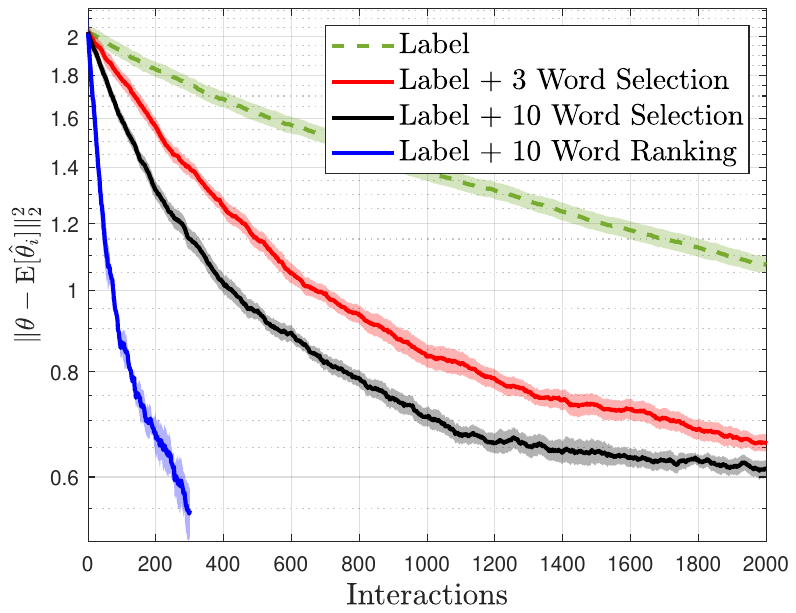} &
            \includegraphics[width=0.4\linewidth, clip, trim=3.5cm 9cm 3cm 9cm]{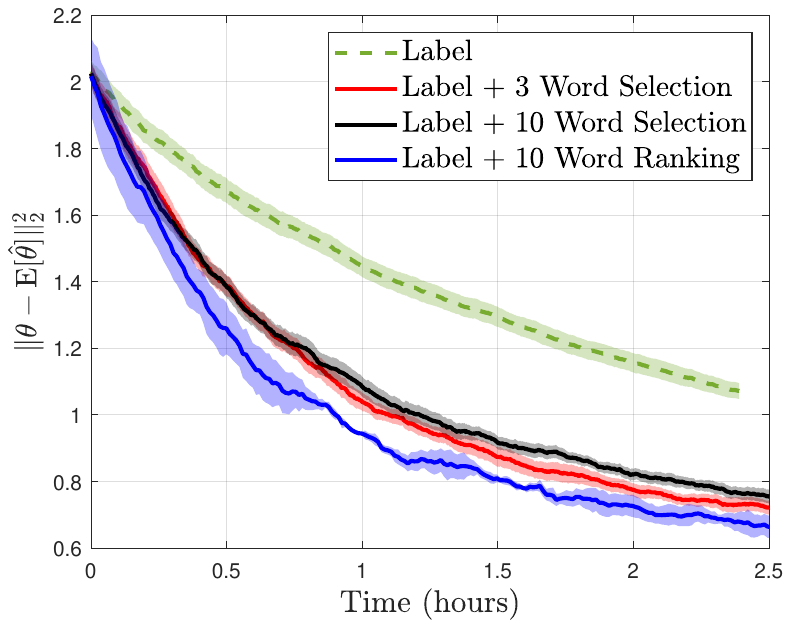} \\
            a) Sample complexity & b) Time cost \\
        \end{tabular}
        \caption{MSE evolution on the word sentiment classification task. The gains in sample complexity with complex queries is not so prominent in time cost. In fact, while the sample complexity decreases faster with word selections of size 10 vs. 3, the time cost decreases slower.}
        \label{fig.:iterations_vs_time}
    \end{figure*}  
    
\subsubsection{Information Gain Ratios}
Maximizing the information rate in Equation~\eqref{eq:rate} also requires estimates of the expected information gained for each combination of question type and set size. Directly recomputing these quantities at every iteration would be computationally expensive. Instead, we exploit an empirical regularity in the information gain ratios across queries. 

As illustrated in Figure~\ref{fig:Iratio_constant}, the ratios of information gain between different queries remain approximately constant across interactions. In particular, more complex queries consistently yield higher information, and in all our experiments, we observe that these relative advantages are stable not only across iterations but also across different initializations of the algorithm. This behavior allows us to estimate the ratios of expected information gain for each question type and set size relative to labeling from a small number of computations. Concretely, we compute the expected information gain as in Subsection~\ref{sec:AL_heuristic} for a few initial conditions and iterations. Then, we use the resulting average information gain ratio relative to the labeling query as a proxy for the proportional information gain in Equation~\eqref{eq:rate}. The values of these proxies strongly depend on the annotator noise~$\sigma$, as well as on the slope of the linear relationship between item distance to the classifier and score, i.e.,~$a$ in Equation~\eqref{eq.linear_score}. This dependence is shown in Figure~\ref{fig:IG_NoiseFactor}. When $a/\sigma$ is large, it is easier to distinguish the score ordering of items that are close in the embedding space, thereby increasing the benefit of richer queries. For a fixed set size~$\mathcal{S}$, ranking queries consistently have a higher estimated information gain than selection queries. For a fixed question type, increasing $|\mathcal{S}|$ raises the expected information gain. However, these improvements exhibit diminishing returns, especially when~$a/\sigma$ is small.

\subsubsection{Question and Set Size Selection}

Our goal is to select both the question type and the set size $|\mathcal{S}|$ that maximize information rate, by balancing information gain and human effort. Once the response time and the relative information gains of different query configurations have been estimated, we select the combination according to Equation~\eqref{eq:rate}. Figure~\ref{fig.:rate_vs_S} summarizes this trade-off for the word sentiment classification task. Figure~\ref{fig.:rate_vs_S}a shows the predicted information rate for each question and set size when using the response time models fitted from human experiments. For any fixed $|\mathcal{S}|$, ranking queries achieve a higher information rate than selection queries. The two question types exhibit different behavior as~$|\mathcal{S}|$ increases. For ranking, the information rate grows monotonically with~$|\mathcal{S}|$ over the feasible range ($|\mathcal{S}| \leq 10$). For selection questions, however, the information rate peaks at~$|\mathcal{S}| = 3$ and decreases for both smaller and larger set sizes.
This optimal query depends strongly on the underlying information gain estimation and cost models. For example,  Figure~\ref{fig.:rate_vs_S}b illustrates the information rates when the response time models are altered; in this case, the optimal query becomes a selection question with two words.

Different query selections translate into distinct learning behaviors. Figure~\ref{fig.:iterations_vs_time}a compares the reduction in \ac{MSE} as a function of the number of interactions on the word sentiment classification task. Larger and richer queries exhibit a strong advantage in sample complexity. 
When measured in wall-clock time, however, the picture changes. As shown in Figure~\ref{fig.:iterations_vs_time}b, the additional time required for larger sets can offset their sample-complexity gains, so that selection from 10 words reduces error more slowly in time than selection from 3 words. Consistent with our analysis, ranking queries with 10 items offer clear benefits in both sample complexity and time, but the improvements in time are less pronounced than in interaction count. These results underscore the importance of optimizing for information rate rather than sample complexity alone when costs are query dependent.

\FloatBarrier

\section{Conclusion and Future Work}
We have presented a framework for incorporating nuanced expert feedback into interactive learning. By exploiting embedding geometries, we have designed human-in-the-loop algorithms that use exemplar selection and ranking queries, providing richer supervision than standard label queries. We have shown, both theoretically and empirically, that these queries reduce sample complexity and accelerate learning. We have also proposed a query-selection strategy that accounts for query-dependent costs and demonstrated time savings on a word sentiment classification task, \emph{moving towards more cost-effective alignment between models and human expertise}. 

Several directions remain open. The relationship between scores and embeddings suggests new query types, such as asking which item a user can label most confidently or is most uncertain about. As in much of the literature, we assume human responses are conditionally independent given latent parameters, which may fail due to context effects or fatigue. Future work includes learning user-specific behavior models or adapting query policies online to user state.

\bibliographystyle{IEEEtran}
\bibliography{references, references-2}
\clearpage
\appendices
\section{Linear Relationship between Embeddings and Classifiers}\label{sec:app.LinearRelationship}
Assumption \ref{assum_humanResponse} is motivated by the observed relationship between implicit human scores and the geometry of existing embedding spaces. As Figure~\ref{fig.:score_vs_distance} shows, there exists a linear relationship between human scores and the distance from items to the \ac{MMSE} classifier in several standard embedding spaces.

\subsection{Empirical Evidence Across Domains}
Previous influential work \cite{Russell1977} identifies three fundamental dimensions of meaning: \textit{valence}, representing the spectrum between positivity and negativity; \textit{arousal}, representing the contrast between active and passive emotions; and \textit{dominance}, capturing the power dynamics from submissive to dominant. Figures~\ref{fig.:score_vs_distance}a,~\ref{fig.:score_vs_distance_app}a~and~\ref{fig.:score_vs_distance_app}b demonstrate that the scores across all three dimensions, as provided by the \ac{NRC} dataset \cite{Mohammad2018}, vary linearly with the distance to the \ac{MMSE} classifier. This linear behavior is consistent across distinct and independently gathered datasets \cite{Hamilton} of valence scores: Figure~\ref{fig.:score_vs_distance}b shows the learned human scores of the top-5000 most frequent non-stop words in the decade of the 2000s, the same dataset used for the word sentiment classification tasks in Sections \ref{sec:emp_sample_complexity} and \ref{sec:emp_time}. Figure~\ref{fig.:score_vs_distance_app}c examines the adjectives appearing more than 100 times in the data from the 2000s, we observe a similar linear behavior in their mean score in these distinct datasets.

We construct our word embedding representations through three steps. First, each word is mapped to a 300-dimensional vector using the standard \textit{word2vec} mapping\footnote{\url{http://ixa2.si.ehu.es/martetxe/vecmap/en.emb.txt.gz}} \cite{word2vec}. Second, the embedded words are normalized to unit norm. Finally, following standard practice for linear classification, we prepend a constant value of 1 to each normalized vector, yielding 301-dimensional feature vectors. This augmentation allows the classifier $\boldsymbol{\theta} \in \mathbb{R}^{301}$ to capture both the decision boundary direction and offset.

Next, we show that the linear relationship holds in the visual domain. Figure~\ref{fig.:score_vs_distance}c shows average aesthetic scores from human raters for 21,979 landscape images in the AVA dataset \cite{Murray2012AVA:Analysis} versus their distance to the MMSE classifier in the 769-dimensional (we prepend a 1) ViT-L/14 embedding space pretrained on CLIP \cite{Radford2021LearningSupervision}. The classifier separates high and low aesthetic images using the median score (5.48) as threshold. We observe a strong correlation between score and distance. 

Figure~\ref{fig.:score_vs_distance}d compares the ages of 23,625 aligned and cropped face images from the UTKFace dataset \cite{Zhang2017AgeAutoencoder} against their distance to a classifier separating faces under 30 years from older faces. Face images are embedded using InceptionResnetV1 pretrained on VGGFace2 \cite{Cao2018VGGFace2:Age}, yielding 512-dimensional vectors to which we prepend a constant value of 1. We observe a linear score-distance relationship again in this image embedding space.

    \begin{figure*}[t]
        \centering
        \begin{tabular}{c}
            \begin{tabular}{ccc}                
                \includegraphics[width=0.3\linewidth]{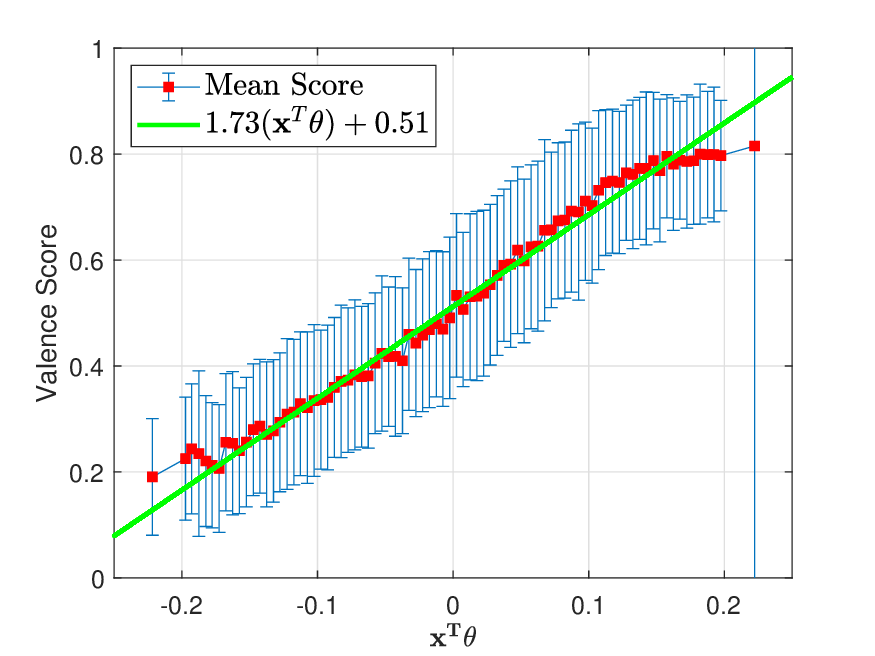} &
                \includegraphics[width=0.3\linewidth]{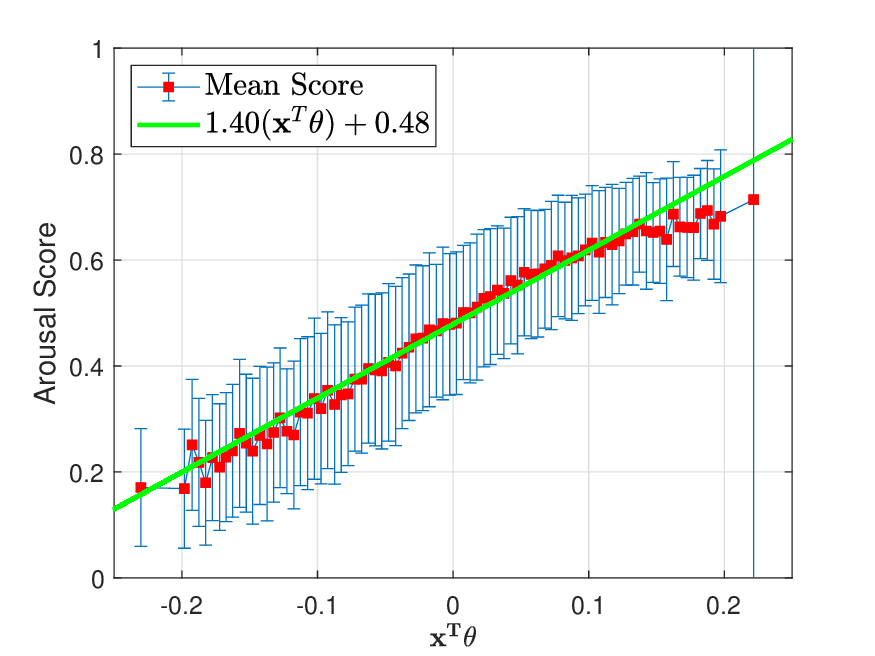} &\includegraphics[width=0.3\linewidth]{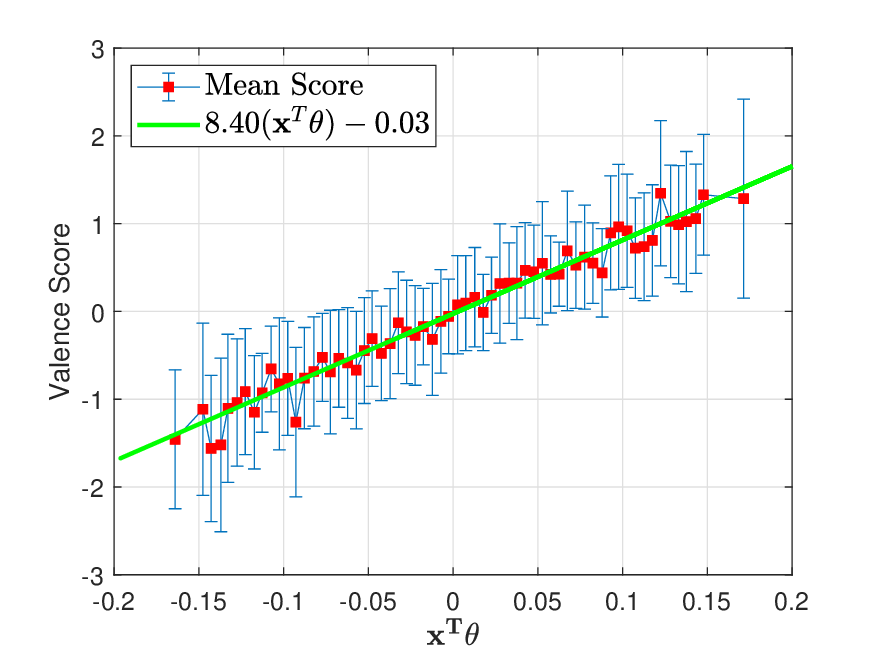}
                \\
                a) Valence in  NRC & b) Arousal in  NRC & Valence of adjectives in SocialSent 
            \end{tabular}
        \end{tabular}
        
        \caption{Relationship between the empirical mean score of words, given by the \ac{NRC} \cite{Mohammad2018} or SocialSent \cite{Hamilton} lexicons, and the distance of their word2vec embeddings to the ground truth classifier. We observe there is an approximately linear relationship.}
        \label{fig.:score_vs_distance_app}
    \end{figure*}

\subsection{Gumbel Noise Model Validation}\label{sec:app.Gumbel}
We model the noise in Equation~\eqref{eq.linear_score} as extreme-value distributed because it yields the Boltzmann choice model, a canonical framework in behavioral economics and discrete choice theory \cite{Georgii2011GibbsTransitions}. To validate this modeling assumption empirically, we examine the residuals $\delta_i = \text{score}(\mathbf{x}_i) - (a\mathbf{x}_i^T\boldsymbol\theta + b)$ against both  $\operatorname{Gumbel-Max}$ and $\operatorname{Gumbel-Min}$ distributions. 

Figure~\ref{fig:Gumbel_fit} shows diagnostic plots for the word valence (a) and image aesthetic (b) datasets. The $\operatorname{Gumbel-Max}$ distribution shows a superior fit, with Kolmogorov-Smirnov statistics of $0.095$ for the word valence dataset and $0.030$ for the image aesthetic dataset (indicating maximum CDF deviations of 3\% and 9.5\%, respectively). Visually, the residual histograms align well with the fitted PDFs, the empirical and theoretical CDFs nearly overlap, and the Q-Q plot points closely follow the theoretical line across the central range of the distribution. While some misspecification appears in the extreme tails, the $\operatorname{Gumbel-Max}$ distribution provides both tractable human decision-making models and a practical approximation for 60-80\% of the data distribution in both datasets.  

The $\operatorname{Gumbel-Min}$ distribution shows acceptable but slightly weaker fit, with Kolmogorov-Smirnov statistics of $0.136$ for word valence and $0.123$ for image aesthetics. While these values indicate larger maximum deviations compared to $\operatorname{Gumbel-Max}$, they remain within conventional bounds for acceptable model fit in behavioral data analysis. 
\begin{figure*}
    \centering
        \begin{tabular}{c}
        \includegraphics[width=0.8\linewidth]{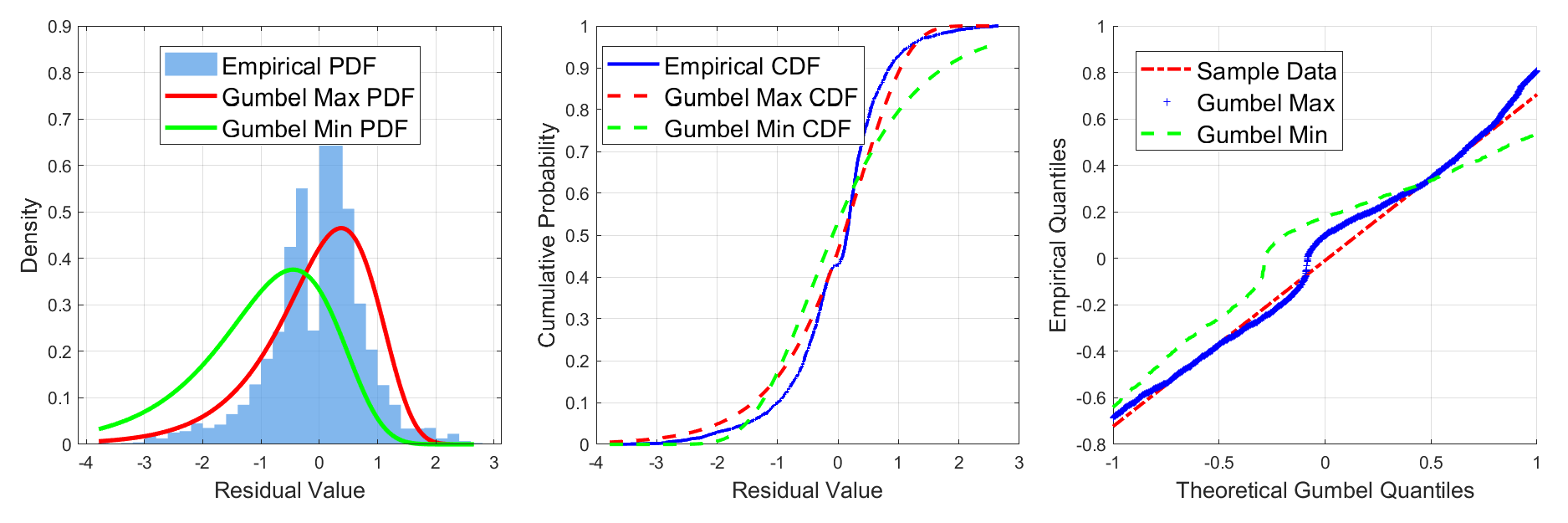}\\
        a) Word Valence\\
        \includegraphics[width=0.8\linewidth]{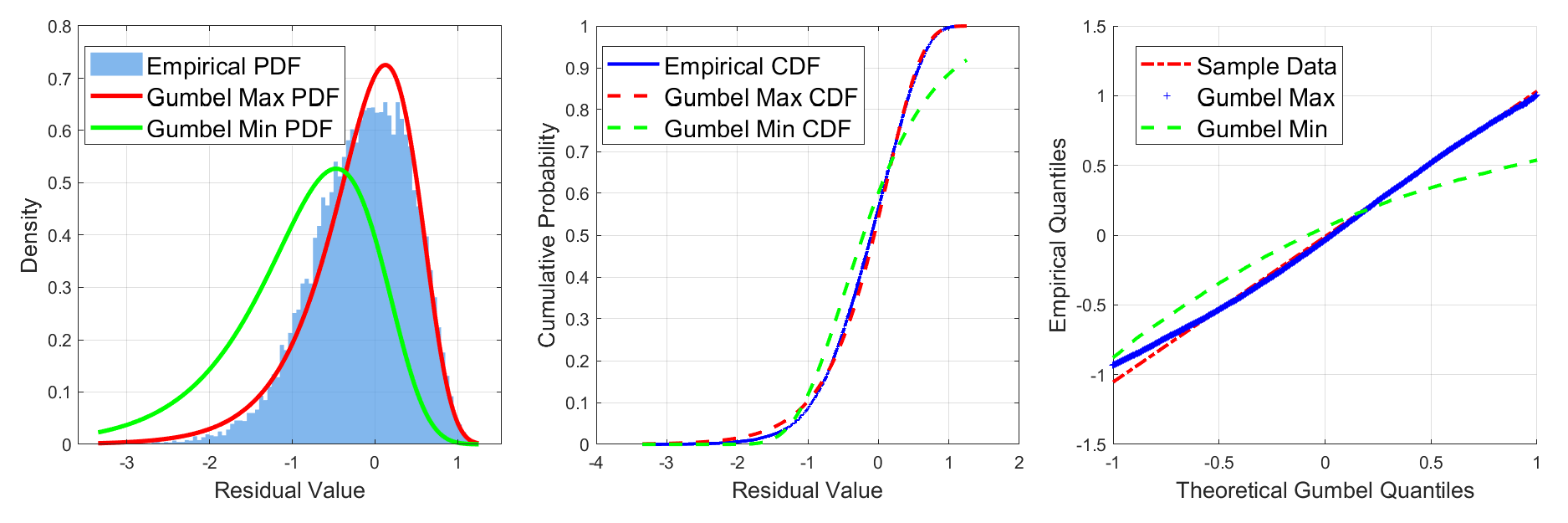}\\
        b) Image Aesthetic
        \end{tabular}
    \caption{Goodness-of-fit diagnostics for the $\operatorname{Gumbel-Max}$ and $\operatorname{Gumbel-Min}$ noise assumption in the linear score model. Each row shows, from left to right, the empirical residual density vs. fitted $\operatorname{Gumbel}$ PDFs, the empirical CDF vs. fitted $\operatorname{Gumbel}$ CDFs, and a Q–Q plot comparing empirical and theoretical $\operatorname{Gumbel}$ quantiles.}
    \label{fig:Gumbel_fit}
\end{figure*}

\section{Auxiliary Lemmas and Proofs}
 \begin{lemma}\label{lemma:LCC}
        The posterior distribution of the classifier given the history $p(\boldsymbol\theta | \mathcal{F}_t)$ is \ac{LCC}.
    \end{lemma}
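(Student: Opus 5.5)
The plan is to show that the unnormalized posterior $p_0(\boldsymbol\theta)\prod_{k=1}^t \mathbb{P}[o_k\mid\boldsymbol\theta,q_k,\mathcal{S}_k]$ is a product of log-concave functions of $\boldsymbol\theta$. Products of log-concave functions are log-concave, and normalization by the constant $\mathbb{P}[o_t\mid q_t,\mathcal{S}_t,\mathcal{F}_{t-1}]$ does not affect log-concavity. By Assumption \ref{assum.:xy_ind_H_given_theta}, the Bayes recursion in line 6 of Algorithm~\ref{algo.:ideal} factorizes exactly this way, since each likelihood depends on the history only through $\boldsymbol\theta$. By Assumption \ref{assum.:y_ind_S}, each exemplar-selection likelihood further splits as $\mathbb{P}[i_t\mid\boldsymbol\theta,\mathcal{S}_t,q_t]\,\mathbb{P}[y_t\mid\mathbf{x}_{i_t},\boldsymbol\theta]$. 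Each ranking likelihood splits as the Plackett--Luce product \eqref{eq.likelihood_qrank} times the labels implied by the threshold $\ell$. The argument then proceeds by induction on $t$.

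The base case is Assumption \ref{assum.:prior}: the uniform density on the hypercube $[-M,M]^d$ is the indicator of a convex set. Its logarithm is $0$ on the cube and $-\infty$ outside, which is concave, so $p_0$ is log-concave with compact convex support.

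Next I check each likelihood factor as a function of $\boldsymbol\theta$.
\begin{itemize}
\item \emph{Label factor.} Equation \eqref{eq.likelihood_label} gives $\log\mathbb{P}[y\mid\mathbf{x},\boldsymbol\theta]=-\log(1+\exp(\pm w\,\mathbf{x}^T\boldsymbol\theta))$. The softplus $u\mapsto\log(1+e^u)$ is convex, and composing with an affine map keeps it convex, so this term is concave.
\item \emph{Selection factor.} Equations \eqref{eq.likelihood_qhigh}--\eqref{eq.likelihood_qlow} give
\[
\log\mathbb{P}[i\mid\mathcal{S},\boldsymbol\theta,q]=K\mathbf{x}_i^T\boldsymbol\theta-\log\sum_{\mathbf{x}\in\mathcal{S}}\exp(K\mathbf{x}^T\boldsymbol\theta).
\]
This is an affine term minus a log-sum-exp of affine functions. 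Log-sum-exp is convex and convexity is preserved under affine precomposition, so the expression is concave.
\item \emph{Ranking factor.} The Plackett--Luce likelihood is a finite product of such softmax terms over the nested sets $\mathcal{R}_j$, so its logarithm is a sum of concave functions and hence concave.
\end{itemize}

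The induction step then follows. Multiplying the log-concave $p(\boldsymbol\theta\mid\mathcal{F}_{t-1})$ by a log-concave likelihood and dividing by a positive constant gives a log-concave $p(\boldsymbol\theta\mid\mathcal{F}_t)$.

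I expect the only delicate point to be what \ac{LCC} requires beyond log-concavity, presumably compact support or integrability as used in \cite[Lemma 3.1]{Canal2019}. Here the support of the posterior is always contained in the cube $[-M,M]^d$, so compactness and integrability are automatic. The normalizer is strictly positive because every likelihood factor is strictly positive on the cube.
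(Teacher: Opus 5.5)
Your proposal is correct and follows the paper's proof. Both write the posterior, via Bayes' rule and Assumption~\ref{assum.:xy_ind_H_given_theta}, as the uniform prior times a product of softmax and logistic likelihood factors, and conclude by closure of log-concavity under products; the paper additionally states explicitly that $q_t,\mathcal{S}_t$ are deterministic given $\mathcal{F}_{t-1}$, which you use implicitly through line 6 of Algorithm~\ref{algo.:ideal}. Your softplus and log-sum-exp checks supply details the paper only asserts, and since LCC here just means log-concave, your extra remark about compact support is harmless but not needed.
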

    \begin{proof}
        \begin{align*}
            p_t(\boldsymbol\theta) :&= p(\boldsymbol\theta | \mathcal{F}_t)  = p(\boldsymbol\theta | o_t, q_t, \mathcal{S}_t, \mathcal{F}_{t-1}) \\
            &\stackrel{(1)}{=}\frac{p(o_t|q_t, \mathcal{S}_t, \boldsymbol\theta, \mathcal{F}_{t-1}) p(\boldsymbol\theta | q_t, \mathcal{S}_t,\mathcal{F}_{t-1})}{p(o_t |q_t, \mathcal{S}_t, \mathcal{F}_{t-1}) }\\
            & \stackrel{(2)}{=}\frac{p(o_t|q_t, \mathcal{S}_t,\boldsymbol\theta) }{p(o_t | \mathcal{F}_{t-1}) }p(\boldsymbol\theta | \mathcal{F}_{t-1})\\
            & \stackrel{(3)}{=} \prod_{j=1}^t \frac{p(o_j|q_j, \mathcal{S}_j,\boldsymbol\theta) }{p(o_j |\mathcal{F}_{j-1}) }p_0(\boldsymbol\theta)\\
            & \stackrel{(4)}{=}  \frac{\prod_{j=1}^tp(o_j|q_j, \mathcal{S}_j,\boldsymbol\theta) }{p( \mathcal{F}_{t}) }p_0(\boldsymbol\theta)
        \end{align*}
        where 
        \begin{enumerate}[label=(\arabic*)]
            \item follows from Bayes theorem,
            \item follows from Assumption \ref{assum.:xy_ind_H_given_theta}, and because given the past history the query and word set are selected deterministically by the algorithm,
            \item follows by induction,
            \item follows from the law of total probability.
        \end{enumerate}
        Combining Equations \eqref{eq.likelihood_label}
        -\eqref{eq.likelihood_qrank} and Assumption \ref{assum.:y_ind_S}, we deduce that the likelihood of the human response,  $p(o_j|q_j, \mathcal{S}_j,\boldsymbol\theta)$ , is given by a product of softmax and logistic functions, both of which are \ac{LCC}. From Assumption \ref{assum.:prior}, the prior  $p_0(\boldsymbol\theta)$  is sampled from a uniform distribution, which is also \ac{LCC}. The product of \ac{LCC} functions is also \ac{LCC}, thus the posterior $p_t(\boldsymbol\theta) $ is \ac{LCC}. 
    \end{proof}

    \begin{lemma} \label{lemma:likelihood}
        The likelihood of any word and label pair $\mathbf{x} \in \mathcal{X}$ and $y \in \{-1, 1\}$ is lower bounded as
        \begin{equation*}
            \min_{\mathbf{x} , y, \mathcal{S}, \boldsymbol\theta, q \in \{q_{\text{high}}, q_{\text{low}}\}} \mathbb{P}\left[\mathbf{x} ,y |  \mathcal{S}, \boldsymbol\theta , q\right]  
            \geq {\gamma_1 \gamma_2}>0, 
        \end{equation*}
            with $\gamma_1  := \frac{\exp{\left(-\left|\frac{a}{\sigma}\right|M\sqrt{P_{\mathbf{x}}d}\right)}}{\exp{\left(-\left|\frac{a}{\sigma}\right|M\sqrt{P_{\mathbf{x}}d}\right)}+ (|\mathcal{S}|-1)\exp{\left(\left|\frac{a}{\sigma}\right|M\sqrt{P_{\mathbf{x}}d}\right)}} $ and $\gamma_2 = \frac{1}{1 +\exp{(wM\sqrt{P_{\mathbf{x}}d})}}$.
    \end{lemma}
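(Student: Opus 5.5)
The likelihood factors into a selection term and a label term, and I will bound each from below separately. By Assumption~\ref{assum.:y_ind_S},
\[
\mathbb{P}[\mathbf{x},y\mid\mathcal{S},\boldsymbol\theta,q]=\mathbb{P}[\mathbf{x}\mid\mathcal{S},\boldsymbol\theta,q]\,\mathbb{P}[y\mid\mathbf{x},\boldsymbol\theta].
\]
The first factor is the softmax in \eqref{eq.likelihood_qhigh} or \eqref{eq.likelihood_qlow}. The second is the logistic in \eqref{eq.likelihood_label}, or its complement $1-\mathbb{P}[y=1\mid\mathbf{x}]$ for $y=-1$. Both expressions depend on $\boldsymbol\theta$ only through the inner products $\mathbf{x}_j^T\boldsymbol\theta$. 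The key step is therefore a uniform bound $|\mathbf{x}^T\boldsymbol\theta|\le M\sqrt{P_{\mathbf{x}}d}$. I read $P_{\mathbf{x}}$ as a bound on the squared entries of the embeddings, $\max_{\mathbf{x}\in\mathcal{X}}\|\mathbf{x}\|_\infty^2$, or equivalently take $P_{\mathbf{x}}d$ as a bound on $\|\mathbf{x}\|_2^2$. Under Assumption~\ref{assum.:prior}, $\boldsymbol\theta$ lies in $[-M,M]^d$, and the posterior keeps this support, so $\|\boldsymbol\theta\|_2\le M\sqrt d$. Cauchy--Schwarz then gives $|\mathbf{x}^T\boldsymbol\theta|\le\|\mathbf{x}\|_2\|\boldsymbol\theta\|_2\le M\sqrt{P_{\mathbf{x}}d}$, after absorbing the dimension factor into the definition of $P_{\mathbf{x}}$ so that the constants match the statement. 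Write $B:=M\sqrt{P_{\mathbf{x}}d}$.

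For the selection factor, set $K=\pm a/\sigma$. The softmax probability of item $i$ equals
\[
\frac{e^{K\mathbf{x}_i^T\boldsymbol\theta}}{e^{K\mathbf{x}_i^T\boldsymbol\theta}+\sum_{j\ne i}e^{K\mathbf{x}_j^T\boldsymbol\theta}}.
\]
The map $u\mapsto u/(u+c)$ is increasing in $u$ and decreasing in $c$. So I lower-bound the numerator by $e^{-|K|B}$ and upper-bound each of the $|\mathcal{S}|-1$ other terms by $e^{|K|B}$, which gives exactly $\gamma_1$. For the label factor, both $(1+e^{wz})^{-1}$ and $1-(1+e^{wz})^{-1}=(1+e^{-wz})^{-1}$ are at least $(1+e^{|w||z|})^{-1}$. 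With $|z|\le B$ this gives $\gamma_2$, taking $w\ge0$ or reading $w$ as $|w|$.

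Finally, multiply the two bounds and note that $\gamma_1,\gamma_2>0$ because every exponential is finite and positive. The bounds are uniform in $\mathbf{x}$, $y$, $\mathcal{S}$, $\boldsymbol\theta$ and $q$, so they hold for the minimum as well.

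The only delicate step is the norm bound: I have to pin down what $P_{\mathbf{x}}$ denotes so that $\|\mathbf{x}\|_2\|\boldsymbol\theta\|_2\le M\sqrt{P_{\mathbf{x}}d}$ holds exactly as written. I would fix $P_{\mathbf{x}}$ as a bound on the per-coordinate power of the embeddings. If that reading fails, I would use the unit-normalized embeddings with the prepended 1, which give $\|\mathbf{x}\|_2^2=2$, and state $P_{\mathbf{x}}$ accordingly. The monotonicity arguments are routine.
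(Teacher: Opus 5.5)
Your argument is the paper's: factor the likelihood via Assumption~\ref{assum.:y_ind_S}, bound $|\mathbf{x}^T\boldsymbol\theta|$ by Cauchy--Schwarz using $\|\boldsymbol\theta\|_2\le M\sqrt d$ from the hypercube prior, then use monotonicity of the softmax and the logistic to get $\gamma_1$ and $\gamma_2$. Your remark that $\gamma_2$ needs $w\ge 0$ (or $|w|$ in place of $w$) is a fair point that the paper leaves implicit.

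The one thing to correct is how you read $P_{\mathbf{x}}$. The paper defines it by $\|\mathbf{x}\|_2^2\le P_{\mathbf{x}}$, which gives $\|\mathbf{x}\|_2\|\boldsymbol\theta\|_2\le\sqrt{P_{\mathbf{x}}}\,M\sqrt d=M\sqrt{P_{\mathbf{x}}d}$ directly. The per-coordinate reading you say you would adopt only gives $\|\mathbf{x}\|_2\le\sqrt{P_{\mathbf{x}}d}$, hence $|\mathbf{x}^T\boldsymbol\theta|\le Md\sqrt{P_{\mathbf{x}}}$, which is off by a factor $\sqrt d$. So drop that reading and the ``absorbing the dimension factor'' step. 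Your fallback with $\|\mathbf{x}\|_2^2=2$ and $P_{\mathbf{x}}=2$ is exactly the $\ell_2$ reading and makes the stated constants come out with no adjustment.
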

    \begin{proof}
        By construction, all word embeddings are bounded, i.e., $\|\mathbf{x}\|_2^2 \leq P_{\mathbf{x}}\}$. From Assumption \ref{assum.:prior}, we know the norm of the classifier is bounded by $\|\boldsymbol\theta\|_2^2 \leq d M^2$. We leverage these constraints to bound the likelihood. 

    Note that 
    \begin{equation*}
        \exp{\left(-\left|\frac{a}{\sigma}\right|M\sqrt{P_{\mathbf{x}}d}\right)}
        \leq 
        \exp{(k\mathbf{x}^T\boldsymbol\theta)}
        \leq \exp{\left(\left|\frac{a}{\sigma}\right|M\sqrt{P_{\mathbf{x}}d}\right)},
    \end{equation*}
    where $k = \frac{a}{\sigma}$ when $q = q_{\text{high}}$ and $k = -\frac{a}{\sigma}$ when $q = q_{\text{low}}$. Therefore, 
    \begin{align*}
        &\min_{\mathbf{x} , \mathcal{S}, \boldsymbol\theta, q} \mathbb{P}\left[\mathbf{x} | \mathcal{S}, \boldsymbol\theta , q\right] \\ & \geq \frac{\exp{\left(-\left|\frac{a}{\sigma}\right|M\sqrt{P_{\mathbf{x}}d}\right)}}{\exp{\left(-\left|\frac{a}{\sigma}\right|M\sqrt{P_{\mathbf{x}}d}\right)}+ (|\mathcal{S}|-1)\exp{\left(\left|\frac{a}{\sigma}\right|M\sqrt{P_{\mathbf{x}}d}\right)}} =:\gamma_1.
    \end{align*}

    In a similar fashion, we use Assumption \ref{assum.:y_ind_S} to bound
    
    \begin{align*}
    \min_{y, \mathbf{x}, \mathcal{S}, \boldsymbol\theta , q} & \mathbb{P}\left[y|\mathbf{x}, \boldsymbol\theta, q, \mathcal{S} \right] \\&= \min_{\mathbf{x}, \boldsymbol\theta} \min_{y \in \{-1, 1\}}\left(\frac{1}{1 + \exp{\left(y w (\boldsymbol\theta^T \mathbf{x})\right)}}\right)\\& \geq  \frac{1}{1 +\exp{(wM\sqrt{P_{\mathbf{x}}d})}} =: \gamma_2.
    \end{align*}

    We conclude the proof by combining the bounds    
    \begin{align*}
    &\min_{\mathbf{x} , \mathcal{S}, y,\boldsymbol\theta, q}  \mathbb{P}\left[\mathbf{x} ,y |  \mathcal{S}, \boldsymbol\theta , q\right] \\
    & \ \ =  \min_{\mathbf{x} , \mathcal{S}, y,\boldsymbol\theta, q} \mathbb{P}\left[\mathbf{x} |  \mathcal{S}, \boldsymbol\theta , q\right] \mathbb{P}\left[y |  \mathbf{x}, \mathcal{S}, \boldsymbol\theta , q\right] \geq \gamma_1\gamma_2.
    \end{align*}

    \end{proof}

\begin{corollary} \label{corollary:likelihood}
    The likelihood of any answer to a query in $q \in \mathcal{Q} = \{q_{\text{high}}, q_{\text{low}}, q_{\text{rank}}\}$ is lower bounded as
    \begin{equation*}
        \min_{o, \mathcal{S}, \boldsymbol\theta, q} \mathbb{P}\left[o |  \mathcal{S}, \boldsymbol\theta , q\right]  
        \geq 2^{\gamma_L}>0, 
    \end{equation*}
    with $\gamma_L = \log_2 \left(\gamma_1^{|\mathcal{S}|-1}\gamma_2^{|\mathcal{S}|} \right).$
    
\end{corollary}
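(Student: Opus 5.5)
The proof splits by query type. In each case I reduce to the per-factor bounds of Lemma~\ref{lemma:likelihood}, namely $\gamma_1$ for a softmax choice factor and $\gamma_2$ for a logistic label factor. Throughout I use $0<\gamma_1,\gamma_2\le 1$. Two facts give this. First, a softmax selection probability over at most $|\mathcal{S}|$ items is always at least $\gamma_1$ evaluated at the largest set size, because $\gamma_1$ is decreasing in the number of competitors. Second, every likelihood factor is a probability, hence at most one.

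For ranking queries $q=q_{\text{rank}}$ with answer $o=(\mathbf{r},\ell)$, I factor the likelihood as in Equation~\eqref{eq.likelihood_qrank}. This gives a product over $j=1,\dots,|\mathcal{S}|-1$ of softmax factors over the remaining set $\mathcal{R}_j$. By Assumption~\ref{assum.:y_ind_S}, the labels implied by the threshold $\ell$ contribute a product of $|\mathcal{S}|$ logistic factors $\mathbb{P}[y_j|\mathbf{x}_{r_j}]$. For each selection factor, $|\mathcal{R}_j|\le|\mathcal{S}|$. The argument of Lemma~\ref{lemma:likelihood} then applies verbatim, since $|\mathbf{x}^T\boldsymbol\theta|\le M\sqrt{P_{\mathbf{x}}d}$, and yields a lower bound $\ge\gamma_1$. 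Each label factor is $\ge\gamma_2$, exactly as in that lemma. Multiplying gives $\mathbb{P}[o|\mathcal{S},\boldsymbol\theta,q_{\text{rank}}]\ge\gamma_1^{|\mathcal{S}|-1}\gamma_2^{|\mathcal{S}|}=2^{\gamma_L}$.

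For selection queries $q\in\{q_{\text{high}},q_{\text{low}}\}$, Lemma~\ref{lemma:likelihood} directly gives $\mathbb{P}[o|\mathcal{S},\boldsymbol\theta,q]\ge\gamma_1\gamma_2$. Assuming $|\mathcal{S}|\ge2$, we have $\gamma_1\gamma_2\ge\gamma_1^{|\mathcal{S}|-1}\gamma_2^{|\mathcal{S}|}$ because $\gamma_1,\gamma_2\le1$. If $|\mathcal{S}|=1$, then $\gamma_1=1$ and both expressions reduce to $\gamma_2$. Taking the minimum over the three query types yields the uniform bound $2^{\gamma_L}$. Positivity is immediate: $\gamma_1,\gamma_2>0$, and taking $\log_2$ of their product is just the notation $\gamma_L$.

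The main obstacle is checking the ranking factorization carefully. Two points need attention. The Plackett--Luce product in Equation~\eqref{eq.likelihood_qrank} has only $|\mathcal{S}|-1$ nontrivial factors, since the last item is forced, and this is what produces the exponent $|\mathcal{S}|-1$ on $\gamma_1$. The threshold $\ell$ must be treated as the joint outcome of the $|\mathcal{S}|$ independent labels given the ranking. That this event's probability equals the product of logistic factors relies on Assumption~\ref{assum.:y_ind_S}, which makes the labels conditionally independent of the order and of the other items. Once these two points are fixed, the rest is the monotonicity remark above.
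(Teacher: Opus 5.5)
Your proposal is correct and follows the paper's proof. You factor the ranking likelihood into $|\mathcal{S}|-1$ Plackett--Luce softmax factors, each at least $\gamma_1$, and $|\mathcal{S}|$ logistic label factors, each at least $\gamma_2$, then observe that the selection-query bound $\gamma_1\gamma_2$ from Lemma~\ref{lemma:likelihood} dominates $\gamma_1^{|\mathcal{S}|-1}\gamma_2^{|\mathcal{S}|}$. You also make explicit two details the paper leaves implicit: $\gamma_1$ decreases with the number of competitors, which is needed because the softmax factors are over the shrinking sets $\mathcal{R}_j$, and the $|\mathcal{S}|=1$ edge case.
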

\begin{proof}
    The likelihood of any answer to a ranking query, i.e., ranking order and threshold pair $(\mathbf{r}, l) $, is lower bounded as
    \begin{align*}
        & \min_{\mathbf{r}, l, \mathcal{S}, \boldsymbol\theta, q}  \mathbb{P}\left[\mathbf{r}, l |  \mathcal{S}, \boldsymbol\theta , q_{\text{rank}}\right]  \\
         & = \min_{\mathbf{r}, l, \mathcal{S}, \boldsymbol\theta, q}  \mathbb{P}\left[\mathbf{r} |  \mathcal{S}, \boldsymbol\theta , q_{\text{rank}}\right] \mathbb{P}\left[l | \mathbf{r}, \mathcal{S}, \boldsymbol\theta , q_{\text{rank}}\right] \\
        &= \min_{\mathbf{r}, l, \mathcal{S}, \boldsymbol\theta, q}  \prod_{j=1}^{|\mathcal{S}|-1} \mathbb{P}\left[r_j | \mathcal{R}_j, \boldsymbol{\theta}, q_{\text{high}}\right]  \times
        \\ & \quad \mathbb{P}\left[ y(\mathbf{x}_{r_1}, ... .r_{l-1}) = 1 \cap y(\mathbf{x}_{r_l}, ...,r_{|\mathcal{S}|}) = -1 | \mathbf{r}, \mathcal{S}, \boldsymbol\theta , q_{\text{rank}}\right]        
        \\  &\geq\gamma_1^{|\mathcal{S}|-1}\gamma_2^{|\mathcal{S}|}>0, 
    \end{align*}
    which is strictly lower than the lower bound from Lemma \ref{lemma:likelihood}, because $\gamma_1\gamma_2<1$.
\end{proof}

\begin{lemma} \label{lemma:entropy_dif}
The expected difference in entropy from one iteration to the next is bounded as
    \begin{equation*}
        \mathbb{E}_{o_i}\left[|h(\boldsymbol\theta; \mathcal{F}_i)-h(\boldsymbol\theta; \mathcal{F}_{i-1})|\right] \leq \gamma< \infty,
    \end{equation*}
    with $\gamma = 16d +d \log_2 2\pi e d - 2\gamma_L$.
\end{lemma}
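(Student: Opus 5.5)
The approach is to bound the one-step entropy change through the log-likelihood ratio. By Lemma~\ref{lemma:LCC}, the posterior update is
\[
p_i(\boldsymbol\theta)=\frac{p(o_i\mid q_i,\mathcal{S}_i,\boldsymbol\theta)}{p(o_i\mid\mathcal{F}_{i-1})}\,p_{i-1}(\boldsymbol\theta).
\]
The main tool is Corollary~\ref{corollary:likelihood}: for every $\boldsymbol\theta$ and every outcome, $2^{\gamma_L}\le p(o_i\mid q_i,\mathcal{S}_i,\boldsymbol\theta)\le 1$. The same bounds then hold for the marginal $p(o_i\mid\mathcal{F}_{i-1})$, since it is an average of these likelihoods. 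Hence the ratio $p_i/p_{i-1}$ lies in $[2^{\gamma_L},2^{-\gamma_L}]$ pointwise. This pointwise control is the source of the $-2\gamma_L$ term.

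I will not control $h(\boldsymbol\theta;\mathcal{F}_i)$ directly by this ratio, because the entropy is taken under a different measure at each step. Instead I use a triangle inequality through $\log$ of the covariance determinant. I will show that both $h(\boldsymbol\theta;\mathcal{F}_i)$ and $h(\boldsymbol\theta;\mathcal{F}_{i-1})$ lie in a window
\[
\Bigl[\tfrac d2\log_2|\Sigma_{i}|^{1/d}-c_1,\ \tfrac d2\log_2 2\pi e|\Sigma_i|^{1/d}\Bigr].
\]
The upper end is the Gaussian maximum-entropy bound already used in \eqref{eq.lemma31_delta}. The lower end follows from log-concavity (Lemma~\ref{lemma:LCC}) and \cite[Lemma 3.1]{Canal2019}, exactly as in \eqref{eq.Z_T}. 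The width of this window is at most $\tfrac d2\log_2\bigl(2\pi e\cdot e^4d^2/(4\sqrt2(d+2))\bigr)$. This is bounded by something like $\tfrac{d}{2}\cdot 8\log_2 e+\tfrac d2\log_2 2\pi e d$, so the $16d+d\log_2 2\pi e d$ terms should come from this window after crude absorbing, taking the window once for each of the two entropies.

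The remaining task is to relate $|\Sigma_i|$ to $|\Sigma_{i-1}|$ using the bounded density ratio. Since $p_i\le 2^{-\gamma_L}p_{i-1}$ and $p_{i-1}\le 2^{-\gamma_L}p_i$, for any vector $v$ the second moment satisfies
\[
v^T\Sigma_i v\le \mathbb{E}_{p_i}\bigl[(v^T(\boldsymbol\theta-\widetilde{\boldsymbol\theta}_{i-1}))^2\bigr]\le 2^{-\gamma_L}\,v^T\Sigma_{i-1}v,
\]
and symmetrically with $i$ and $i-1$ exchanged. The first inequality holds because the mean minimizes the second moment. Therefore $|\Sigma_i|$ and $|\Sigma_{i-1}|$ differ by a factor of at most $2^{-d\gamma_L}$. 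After the $\tfrac1d$-th power and the factor $\tfrac d2$, this gives at most $-\tfrac d2\gamma_L$.

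The main obstacle is the bookkeeping of constants. A careless split produces $-\tfrac d2\gamma_L$ rather than the claimed $-2\gamma_L$. If that happens, I will try a second route: bound $h_i-h_{i-1}$ directly. The identity is
\[
h_i=-\mathbb{E}_{p_i}[\log p_{i-1}]-\mathbb{E}_{p_i}[\log(p_i/p_{i-1})],
\]
in which the second term is bounded by $|\gamma_L|$ pointwise. The cross-entropy term is then compared with $h_{i-1}$ using the density ratio bound once more, which costs a further $|\gamma_L|$ plus the log-concave window terms. This should yield exactly $2|\gamma_L|=-2\gamma_L$ together with the dimension terms. Finally, everything holds pointwise in $o_i$, so the expectation over $o_i$ is trivially bounded by the same $\gamma$, which is finite since $\gamma_L>-\infty$.
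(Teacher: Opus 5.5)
\paragraph{Verdict.} There is a genuine gap: neither of your two routes establishes the stated constant $\gamma$.

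\paragraph{Route 1 proves a weaker bound.} Your first route is internally correct, but the bound it proves is weaker than the one stated. You transfer the pointwise ratio $p_i\le 2^{-\gamma_L}p_{i-1}$ through second moments, which gives $\Sigma_i\preceq 2^{-\gamma_L}\Sigma_{i-1}$. Taking determinants then pays the factor $2^{-\gamma_L}$ once per eigen-direction, so with $W$ your window width you end with $|h_i-h_{i-1}|\le W+\frac d2|\gamma_L|$. Now $|\gamma_L|$ grows like $M\sqrt{P_{\mathbf x}d}$ and is unbounded in $M$, $|a/\sigma|$ and $w$. So once $d>4$, the excess $(\frac d2-2)|\gamma_L|$ cannot be absorbed into the $O(d\log d)$ dimension terms. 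Any finite constant would suffice for Lemma~\ref{lemma:supermartingale}, but this is not the $\gamma$ claimed.

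\paragraph{Route 2 has no valid comparison step.} Your fallback route does not repair this. The step ``compare $-\mathbb E_{p_i}[\log_2 p_{i-1}]$ with $h_{i-1}$ using the density ratio once more, at additive cost $|\gamma_L|$'' has no mechanism behind it.
\begin{itemize}
\item A pointwise ratio bound only yields $\int p_i f\le 2^{-\gamma_L}\int p_{i-1}f$ for $f\ge 0$, and $-\log_2 p_{i-1}$ has no fixed sign.
\item Writing $p_i=r\,p_{i-1}$, you only get $|h_{i-1}+\mathbb E_{p_i}[\log_2 p_{i-1}]|\le (2^{-\gamma_L}-1)\,\mathbb E_{p_{i-1}}|\log_2 p_{i-1}|$. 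This is exponential, not linear, in $|\gamma_L|$.
\item Bounding the cross-entropy from above via $p_{i-1}\ge 2^{\gamma_L}p_i$ just returns $h_i+|\gamma_L|$, which is circular.
\item The log-concave window you cite is an entropy-versus-covariance bound, so it says nothing about a cross-entropy.
\end{itemize}

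\paragraph{Missing ingredient: a sup-norm bound.} What you need is a pointwise sup-norm bound for log-concave densities. This is exactly the tool for entropies taken under different measures, since $-\mathbb E_q[\log_2 p]\ge-\log_2\sup p$ for every $q$. The paper uses \cite[Theorem 5.14]{Lovasz2007TheAlgorithms}: an isotropic log-concave density is at most $2^{8d}d^{d/2}$, so $\sup p_{i-1}\le 2^{8d}d^{d/2}|\Sigma_{i-1}|^{-1/2}$. Combine this with Jensen, with $p_i\le p_{i-1}/p(o_i\mid\mathcal F_{i-1})\le 2^{-\gamma_L}p_{i-1}$, and with the Gaussian maximum-entropy bound for $p_{i-1}$:
\begin{equation*}
-h_i\le\log_2\mathbb E_{p_i}[p_i]\le-\gamma_L+\log_2\sup p_{i-1}\le-\gamma_L+8d+\tfrac d2\log_2 d-\tfrac12\log_2|\Sigma_{i-1}|\le -h_{i-1}+\tfrac{\gamma}{2}.
\end{equation*}
Here covariance and entropy are compared for the same distribution, and the cross-step transfer happens at the level of the density. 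So $|\gamma_L|$ is paid once, not $d/2$ times. For the maximum-entropy step to apply, the sup-norm bound must be used for $p_{i-1}$ with $\Sigma_{i-1}$. The paper's step (4) is written with $\Sigma_{\boldsymbol\theta|\mathcal F_i}$, which should be read as $\Sigma_{\boldsymbol\theta|\mathcal F_{i-1}}$.

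\paragraph{How the paper handles the reverse direction.} The paper treats the other direction only in expectation. Since $I(\boldsymbol\theta;o_i\mid\mathcal F_{i-1})=\mathbb E_{o_i}[h_{i-1}-h_i]\ge 0$, we get $\mathbb E[(h_i-h_{i-1})^+]\le\mathbb E[(h_{i-1}-h_i)^+]\le\gamma/2$. The two halves sum to $\gamma$. This doubling, not your covariance window, is where $16d+d\log_2 2\pi e d-2\gamma_L$ comes from. Alternatively, the same sup-norm argument, using $p_{i-1}\le 2^{-\gamma_L}p_i$ and $\sup p_i\le 2^{8d}d^{d/2}|\Sigma_i|^{-1/2}$, gives the reverse direction pointwise.
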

\begin{proof}
    We extend \cite[Lemma A.2.]{Canal2019} to bound the expected posterior entropy difference for non-equiprobable and non-binary query schemes. We rewrite
    \begin{align}\label{eq.:entropy_difference_summands}
        |h(\boldsymbol\theta; \mathcal{F}_i)-h(\boldsymbol\theta; \mathcal{F}_{i-1})| = & \left(h(\boldsymbol\theta; \mathcal{F}_{i-1})-h(\boldsymbol\theta; \mathcal{F}_{i})\right)^{+} \\
        & + \left(h(\boldsymbol\theta; \mathcal{F}_i)-h(\boldsymbol\theta; \mathcal{F}_{i-1})\right)^{+}, \nonumber
    \end{align}
    where $x^+ := \max(x, 0)$ denotes the positive part. 
    
    Note that
    \begin{align*}
        -h(\boldsymbol\theta; \mathcal{F}_i)&  \stackrel{(1)}{\leq} \log_2 \mathbb{E}_{\boldsymbol\theta| \mathcal{F}_i}[p(\boldsymbol\theta|\mathcal{F}_i)] \\
        &  \stackrel{(2)}{=} \log_2 \mathbb{E}_{\boldsymbol\theta| \mathcal{F}_i} \left[ \frac{p(o_i|\boldsymbol\theta,q_i, \mathcal{S}_i, \mathcal{F}_{i-1}) }{p(o_i |q_i, \mathcal{S}_i, \mathcal{F}_{i-1}) }p(\boldsymbol\theta | \mathcal{F}_{i-1})\right]  \\
        & \stackrel{(3)}{\leq}\log_2 \mathbb{E}_{\boldsymbol\theta| \mathcal{F}_i} \left[\frac{p(\boldsymbol\theta | \mathcal{F}_{i-1}) }{p(o_i |q_i, \mathcal{S}_i, \mathcal{F}_{i-1}) } \right]\\
        & \stackrel{(4)}{\leq} -\frac{1}{2}\log_2 \left|\Sigma_{\boldsymbol\theta|\mathcal{F}_i}\right|+8d +\frac{d}{2} \log_2 d \\
        & \quad - \log_2 p(o_i |q_i, \mathcal{S}_i, \mathcal{F}_{i-1})\\
        &\stackrel{(5)}{=} -\frac{1}{2}\log_2 \left((2\pi e)^d\left|\Sigma_{\boldsymbol\theta|\mathcal{F}_i}\right|\right) + \frac{1}{2} \log_2 (2\pi e)^d\\
        & \quad +8d +\frac{d}{2} \log_2 d - \log_2 p(o_i |q_i, \mathcal{S}_i, \mathcal{F}_{i-1})\\
        &\stackrel{(6)}{\leq} -h(\boldsymbol\theta; \mathcal{F}_{i-1}) +8d +\frac{d}{2} \log_2 2\pi e d - \gamma_L\\
        &  := -h(\boldsymbol\theta; \mathcal{F}_{i-1}) + \frac{1}{2}\gamma.
    \end{align*}
    Thus, we bound the first summand in \eqref{eq.:entropy_difference_summands} as $\left(h(\boldsymbol\theta; \mathcal{F}_{i-1}) - h(\boldsymbol\theta; \mathcal{F}_{i})\right)^+ \leq  \frac{1}{2}\gamma $. The inequalities follow
    \begin{enumerate}[label=(\arabic*)]
        \item from Jensen's inequality,
        \item from Bayes Theorem and because the query is determined by the history,
        \item because $o_i$ is discrete and its likelihood is a valid probability distribution, thus the likelihood is at most 1, and the logarithm is monotonically increasing,
        \item applying the bound in \cite[Theorem 5.14]{Lovasz2007TheAlgorithms} to the \ac{LCC} isotropic $V = \Sigma_{\boldsymbol\theta|\mathcal{F}_i}^{-1/2}W$, where $W \sim p(\boldsymbol\theta|\mathcal{F}_i)$, together with the density of a linear transformation of a random variable.
        \item adding and subtracting $\frac{1}{2} \log_2 (2\pi e)^d$,
        \item from the maximum entropy distribution \cite[Theorem 8.6.5.]{Cover2005ElementsTheory} and Corollary \ref{corollary:likelihood}.
    \end{enumerate}

    To bound the second summand in \eqref{eq.:entropy_difference_summands}, we recall the non-negativity of mutual information
    \begin{align*}
        0 &\leq \mathrm{I}\left(\boldsymbol{\theta} ; o_i, q_i, \mathcal{S}_i | \mathcal{F}_{i-1} \right)  = \mathbb{E}_{o_i}\left[h(\boldsymbol\theta; \mathcal{F}_{i-1}) - h(\boldsymbol\theta; \mathcal{F}_i) \right] \\
        & = \mathbb{E}_{o_i}\left[\left(h(\boldsymbol\theta; \mathcal{F}_{i-1}) - h(\boldsymbol\theta; \mathcal{F}_i)\right) ^+ - \left(h(\boldsymbol\theta; \mathcal{F}_{i}) - h(\boldsymbol\theta; \mathcal{F}_{i-1})\right) ^+ \right]\\
        & \leq \frac{1}{2} \gamma - \mathbb{E}_{o_i}\left[\left(h(\boldsymbol\theta; \mathcal{F}_{i}) - h(\boldsymbol\theta; \mathcal{F}_{i-1})\right) ^+ \right].
    \end{align*}
    Therefore, $\mathbb{E}_{o_i}\left[\left(h(\boldsymbol\theta; \mathcal{F}_{i}) - h(\boldsymbol\theta; \mathcal{F}_{i-1})\right) ^+ \right] \leq \frac{1}{2} \gamma $. Combining the bounds we conclude
    \begin{align*}
        \mathbb{E}_{o_i} \left[\left|h(\boldsymbol\theta; \mathcal{F}_{i-1}) - h(\boldsymbol\theta; \mathcal{F}_i) \right| \right] \leq  \frac{1}{2} \gamma+\frac{1}{2} \gamma=\gamma.
    \end{align*}   
\end{proof}

\begin{lemma} \label{lemma:supermartingale}
    The random variable  $U_i := \frac{-h(\boldsymbol\theta ; \mathcal{F}_{i})}{L} - i$ is a submartingale that fullfils the conditions of the optional stopping theorem.
\end{lemma}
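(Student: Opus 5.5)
We need to show three things about $U_i$: it is adapted and integrable, it has the submartingale drift property, and it satisfies the hypotheses of the optional stopping theorem when stopped at $T_\epsilon$. Write $Z_i := -h(\boldsymbol\theta;\mathcal{F}_i)$, so that $U_i = Z_i/L - i$.

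\emph{Adaptedness and integrability.} Adaptedness is immediate, since $h(\boldsymbol\theta;\mathcal{F}_i)$ is a deterministic functional of the posterior, which is $\mathcal{F}_i$-measurable. For integrability we argue by induction on $i$.
\begin{itemize}
\item At $i=0$, Assumption~\ref{assum.:prior} gives $Z_0=-d\log_2 2M$, which is finite.
\item For the inductive step, Lemma~\ref{lemma:entropy_dif} gives $\mathbb{E}\left[|Z_i-Z_{i-1}|\,\middle|\,\mathcal{F}_{i-1}\right]\le\gamma$. Taking expectations yields $\mathbb{E}|Z_i|\le \mathbb{E}|Z_{i-1}|+\gamma\le |Z_0|+i\gamma<\infty$.
\end{itemize}
Hence $U_i$ is integrable for every $i$.

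\emph{Drift.} Compute the conditional increment:
\begin{align*}
\mathbb{E}[U_i-U_{i-1}\mid\mathcal{F}_{i-1}] = \frac{1}{L}\,\mathbb{E}_{o_i}\left[h(\boldsymbol\theta;\mathcal{F}_{i-1})-h(\boldsymbol\theta;\mathcal{F}_i)\right]-1 .
\end{align*}
The bracketed expectation is the conditional mutual information $I(\boldsymbol\theta;o_i,q_i,\mathcal{S}_i\mid\mathcal{F}_{i-1})$. This identity was already used in the proof of Lemma~\ref{lemma:entropy_dif}. It holds because, by Assumption~\ref{assum.:xy_ind_H_given_theta} and the deterministic query choice, the posterior update is exactly Bayes' rule on $o_i$. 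By Assumption~\ref{assum.:minI}, this mutual information is at least $L$, with $L=L_r$ or $L=L_s$ according to the query type. The increment is therefore nonnegative, which is the submartingale property.

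\emph{Optional stopping.} We use the version of the theorem in \cite{williams1991probability} requiring (a) $\mathbb{E}[T_\epsilon]<\infty$ and (b) $\mathbb{E}[|U_i-U_{i-1}|\mid\mathcal{F}_{i-1}]\le c$ a.s. for some constant $c$.
\begin{itemize}
\item Condition (b) follows directly from Lemma~\ref{lemma:entropy_dif}, with $c=\gamma/L+1$.
\item Condition (a) is the main obstacle, and it must be established without circularity, since the upper bound on $\mathbb{E}[T_\epsilon]$ is exactly what the theorem derives from this lemma.
\end{itemize}

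To obtain (a), we first combine Lemma~\ref{lemma:LCC} with \cite[Lemma 3.1]{Canal2019}. As in \eqref{eq.Z_T}, on $\{T_\epsilon>i\}$ this gives $Z_i\le C_\epsilon$ for a deterministic constant $C_\epsilon$. Next we apply optional stopping to the bounded stopping time $T_\epsilon\wedge n$, which is always valid, and obtain
\begin{align*}
\mathbb{E}[T_\epsilon\wedge n]\le \frac{\mathbb{E}[Z_{T_\epsilon\wedge n}]-Z_0}{L}.
\end{align*}
It remains to bound $Z_{T_\epsilon\wedge n}$ uniformly in $n$. It is at most $C_\epsilon$ before stopping. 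At the stopping step, $Z_{T_\epsilon}\le Z_{T_\epsilon-1}+$ one increment. Its expected overshoot is controlled by Lemma~\ref{lemma:entropy_dif}: summing $\mathbb{E}[|Z_i-Z_{i-1}|\,\mathds{1}\{T_\epsilon=i\}]$ and using that $\{T_\epsilon\ge i\}\in\mathcal{F}_{i-1}$ gives a bound of at most $\gamma$. Therefore $\mathbb{E}[T_\epsilon\wedge n]\le (C_\epsilon+\gamma-Z_0)/L$ for all $n$. Monotone convergence then yields $\mathbb{E}[T_\epsilon]<\infty$, and both conditions of the optional stopping theorem hold.
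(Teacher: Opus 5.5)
\textbf{Gap: the overshoot step.} Your plan for $\mathbb{E}[T_\epsilon]<\infty$ is the right one: optional stopping at the bounded time $T_\epsilon\wedge n$, a uniform bound on $\mathbb{E}[Z_{T_\epsilon\wedge n}]$, then monotone convergence. The step that bounds the overshoot, however, fails as written. The event $\{T_\epsilon=i\}$ is not $\mathcal{F}_{i-1}$-measurable, since it is triggered precisely by a large $i$-th increment. Predictability of $\{T_\epsilon\ge i\}$ therefore only yields
\begin{align*}
\sum_{i=1}^{n}\mathbb{E}\bigl[|Z_i-Z_{i-1}|\,\mathds{1}\{T_\epsilon=i\}\bigr]
&\le\sum_{i=1}^{n}\mathbb{E}\bigl[\mathds{1}\{T_\epsilon\ge i\}\,\mathbb{E}[|Z_i-Z_{i-1}|\mid\mathcal{F}_{i-1}]\bigr]\\
&\le\gamma\,\mathbb{E}[T_\epsilon\wedge n],
\end{align*}
not $\gamma$. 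Substituting this gives $\mathbb{E}[T_\epsilon\wedge n]\le (C_\epsilon-Z_0)/L+(\gamma/L)\,\mathbb{E}[T_\epsilon\wedge n]$. That inequality is vacuous, because $L\le \mathbb{E}_{o_i}[h(\boldsymbol\theta;\mathcal{F}_{i-1})-h(\boldsymbol\theta;\mathcal{F}_{i})]\le\gamma$, so the circularity you explicitly wanted to avoid comes back. In general, a bound on the conditional mean of $|Z_i-Z_{i-1}|$ does not control the expected overshoot at a hitting time.

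\textbf{Fix.} What you need is an almost-sure, one-sided bound, and the paper already contains it. The chain (1)--(6) in the proof of Lemma~\ref{lemma:entropy_dif} holds for every realization of $o_i$: it uses only Jensen, likelihood at most $1$, the sup-density bound for the log-concave $p(\cdot\mid\mathcal{F}_{i-1})$, maximum entropy, and Corollary~\ref{corollary:likelihood}. Hence $Z_i\le Z_{i-1}+\gamma/2$ pathwise. Since you only need an upper bound on $Z_{T_\epsilon\wedge n}$, this gives $Z_{T_\epsilon\wedge n}\le C_\epsilon+\gamma/2$ a.s. It follows that $\mathbb{E}[T_\epsilon\wedge n]\le (C_\epsilon+\gamma/2-Z_0)/L$ uniformly in $n$, and monotone convergence finishes.

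\textbf{Comparison with the paper.} Once patched, your argument is more rigorous than the paper's. The paper applies the martingale convergence theorem to an entropy-threshold stopping time, which gives only $T<\infty$ a.s., and then asserts $\mathbb{E}[T]<\infty$, which does not follow. Its bound $\tau+\gamma/L+1$ on the stopped process also tacitly needs the same pathwise overshoot bound, since \eqref{eq.U_dif} holds only in expectation.

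\textbf{Minor point on the citation.} Williams's optional stopping theorem \cite{williams1991probability} requires $|U_i-U_{i-1}|\le K$ pointwise, which fails here because downward jumps of $Z_i$ are controlled only in conditional mean. The conditional-increment version you state is the one in \cite{Durrett2019Probability:Examples}. That version is valid, so your condition (b) is fine.
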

\begin{proof}
    The expectation of $U_i$ given the previous values of the sequence is
    \begin{align}\label{eq.submartin2}
        \mathbb{E}[U_i | U^{i-1}] & = \frac{\mathbb{E}[Z_i|Z^{i-1}]}{L} - i \geq \frac{\mathbb{E}[Z_{i-1}|Z^{i-1}] + L}{L} - i\nonumber\\
        & =  \frac{Z_{i-1}}{L} - (i-1) = U_{i-1},
    \end{align}
    where $Z_i = -h(\boldsymbol\theta ; \mathcal{F}_{i})$. The first inequality follows from Assumption \ref{assum.:minI}. 
    
    We bound the expected increment per step with Lemma \ref{lemma:entropy_dif},
    \begin{align} \label{eq.U_dif}
        \mathbb{E}\left[|U_{i+1} - U_i| \right]& = \mathbb{E}\left[\left| \frac{Z_{i+1}}{L} - i -1 - \frac{Z_i}{L} + i\right|\right] \nonumber \\ & = \frac{ \mathbb{E}\left[\left| Z_{i+1}-Z_i\right|\right]}{L} + 1\leq \frac{\gamma}{L} + 1.
    \end{align}

    Lastly, we want to show that $\mathbb{E}[T] <   \infty$. Note that a bound on the determinant of the posterior covariance implies a bound on the posterior entropy, so we introduce a threshold $0< \tau < \infty$ dependent on $\epsilon$ such that the stopping time becomes $T:= \min \{i : -h(\boldsymbol{\theta};\mathcal{F}_i) > \tau\}= \min \{i : U_i > \frac{\tau}{L} - i\}$. 

    From \cite[Theorem 4.2.9.]{Durrett2019Probability:Examples} we know that $-h(\boldsymbol{\theta};\mathcal{F}_{i\wedge T})$ is also submartingale, where $i\wedge T:= \min \{i, T\}$. Additionally, it is bounded by
    \[
    -h(\boldsymbol{\theta};\mathcal{F}_{i\wedge T}) \leq \tau + \frac{\gamma}{L} + 1,
    \]    
    by definition of $T$ and \eqref{eq.U_dif}. Therefore, by the Martingale convergence theorem \cite[Theorem 4.2.11.]{Durrett2019Probability:Examples}, as $i \rightarrow \infty$, $-h(\boldsymbol{\theta};\mathcal{F}_{i\wedge T})$ converges a.s. to a limit $H$ with $\mathbb{E}[|H|] < \infty$. Analogously, $U_{i\wedge T}$ also converges a.s. to a limit $U$ with $\mathbb{E}[|U|] < \infty$ as $i \rightarrow \infty$. Putting this together
    \begin{align*}
        i\wedge T &= \left| (i\wedge T) - \frac{-h(\boldsymbol{\theta};\mathcal{F}_{i\wedge T})}{L} + \frac{-h(\boldsymbol{\theta};\mathcal{F}_{i\wedge T})}{L} \right| \\
        &\leq \left| (i\wedge T) - \frac{-h(\boldsymbol{\theta};\mathcal{F}_{i\wedge T})}{L} \right|+\left|  \frac{-h(\boldsymbol{\theta};\mathcal{F}_{i\wedge T})}{L} \right| \\
        & = |U_{i\wedge T}| +  \frac{\left|-h(\boldsymbol{\theta};\mathcal{F}_{i\wedge T})\right|}{L} \xrightarrow{\text { a.s. }} |U| + \frac{|H|}{L} < \infty .
    \end{align*}
    For large enough $i$, $i\wedge T = T $, which implies $T<\infty $ a.s. and therefore $\mathbb{E}[|T|] < \infty$. Combining this fact with \eqref{eq.U_dif}, we conclude that the conditions for the optional stopping theorem are fulfilled. 
\end{proof}

\section{Details on Human Response Modeling}\label{app:human_response_model}
To model annotator response times for word selection and word ranking queries, we conducted a crowdsourced study on Prolific. The study was categorized as minimal risk research qualified for exemption status under 45 CFR 46 104d.2 by the Institutional Review Board (IRB).

\subsection{Participant Selection and Demographics}
To promote data quality, we restricted participation to Prolific users with an approval rate of at least 95\% and at least 1{,}000 prior submissions. To ensure strong English proficiency, we limited eligibility to participants located in the United Kingdom or United States who reported completing an undergraduate degree and listed English as their primary language. All eligible participants received detailed instructions on the task and the annotation interface. Within these instructions, we included multiple-choice attention check questions to verify comprehension. Twenty-two individuals did not pass the preliminary attention checks and were excluded from the study.

After applying eligibility filters and attention checks, \textbf{101 annotators} participated in our study. The demographic breakdown was as follows: 42 female and 59 male participants; 41 residing in the UK and 60 in the US. The majority of participants were students (69 yes, 17 no, 15 no response). The age distribution is shown in Figure \ref{fig:age_histogram}, with a median age of 44 years. The breakdown by ethnicity and country of birth is provided in Tables \ref{tab:ethnicity} and \ref{tab:country_birth}, respectively. Most participants self-identified as White and were born in either the UK or USA.

\begin{figure}
    \centering
    \includegraphics[width=0.8\linewidth]{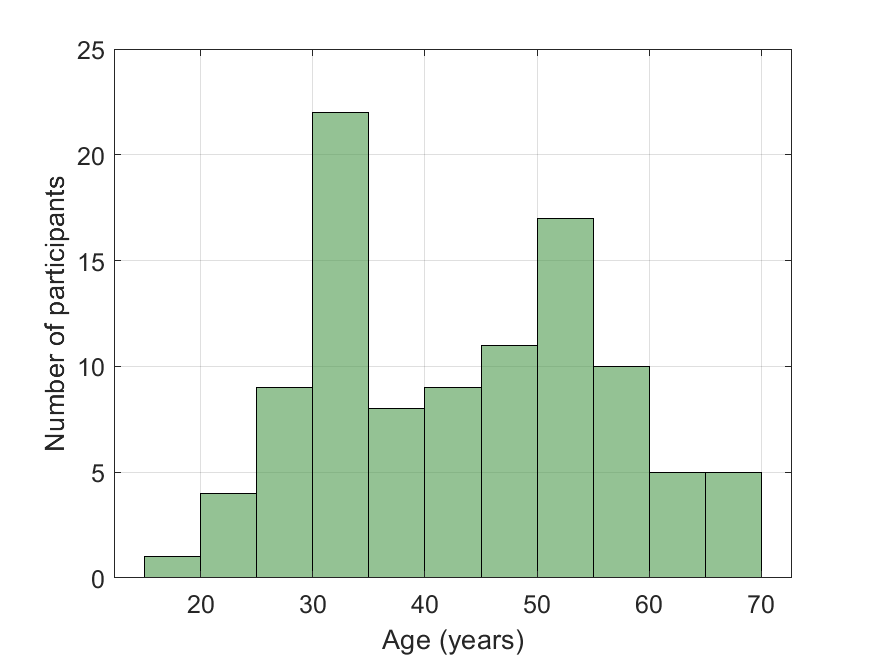}
    \caption{Age distribution of participants. The mean age is 43.25 years and median 44 years.}
    \label{fig:age_histogram}
\end{figure}

\begin{table}[htbp]
    \centering
    \caption{Ethnicity distribution}
    \label{tab:ethnicity}
    \begin{tabular}{lr}
    \toprule
    Ethnicity & Count \\
    \midrule
    White & 68 \\
    Asian & 15 \\
    Black & 11 \\
    Mixed & 4 \\
    Other & 1 \\
    Not available & 2 \\
    \midrule
    \textbf{Total} & \textbf{101} \\
    \bottomrule
    \end{tabular}
\end{table}

\begin{table}[htbp]
    \centering
    \caption{Country of birth distribution}
    \label{tab:country_birth}
    \begin{tabular}{lr}
    \toprule
    Country & Count \\
    \midrule
    United States & 55 \\
    United Kingdom & 36 \\
    Nigeria & 2 \\
    Bulgaria & 1 \\
    China & 1 \\
    Hungary & 1 \\
    Indonesia & 1 \\
    Japan & 1 \\
    Korea & 1 \\
    Malta & 1 \\
    Philippines & 1 \\
    \midrule
    \end{tabular}
\end{table}

To mitigate potential confounds caused by interface familiarity and learning effects, we followed standard A/B testing practice and counterbalanced the query order. Half of the participants completed the word-ranking queries first and then the word-selection queries, while the remaining participants completed the two query types in the reverse order.

Figure \ref{fig:totaltimeviolin} reports the total time taken by participants to go through the instructions and answer all 50 queries (25 $q_{\text{rank}}$ queries and 25 $q_{\text{high}}$ queries). On average, participants required approximately 20 minutes. Based on the observed completion times and the study compensation, the hourly wage was on average \(12.23\) USD, with a median of \(10.60\) and a standard deviation of \(9.53\).

\begin{figure}
    \centering
    \includegraphics[width=0.8\linewidth]{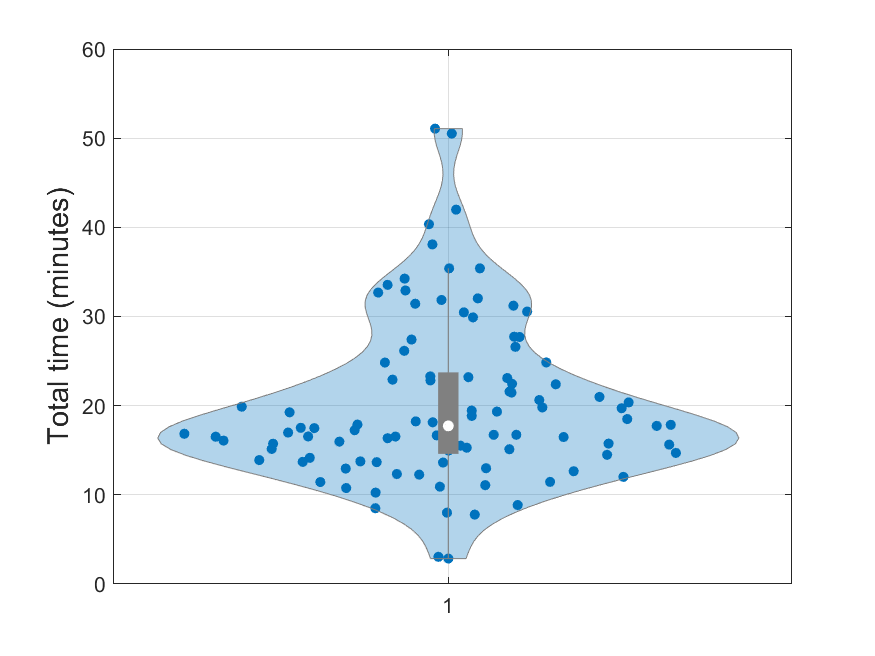}
    \caption{Distribution of total study completion time per participant (including instructions and all word-selection and word-ranking queries).}
    \label{fig:totaltimeviolin}
\end{figure}

Beyond the eligibility criteria and pre-task attention checks described above, we applied additional post hoc quality control to identify inattentive annotators before modeling response times. In particular, we embedded the following five gold-standard sentiment queries with clear and unambiguous sentiment polarity: 
\begin{itemize}
    \item \textit{amazing, rejection}
    \item \textit{bad, horrible}
    \item \textit{terrified, worried}
    \item \textit{inspiring, boring}
    \item \textit{happy, sad}
\end{itemize}
These gold-standard queries were intentionally distributed across the task at fixed positions (queries 1, 2, 11, 12, and 21), ensuring that attention was assessed at the beginning, middle, and end of the experiment rather than only at a single point. We excluded any participant who failed to exactly match the expected response on at least one gold-standard query, since these items were constructed to have an unambiguous correct answer for annotators who understood the task and responded attentively.

After applying all quality-control filters, including gold-standard checks, the final dataset contained 1648 observations for the word selection task and 1256 observations for the ranking task. The breakdown by query type is shown in Table \ref{tab:counts-N}.
\begin{table}[htbp]
\centering
\caption{Counts by word set size after preprocessing (gold-standard queries removed).}
\label{tab:counts-N}
\begin{tabular}{c rr}
\toprule
$N$ & Ranking & Selection \\
\midrule
2 & 142 & 147 \\
3 & 184 & 272 \\
4 & 175 & 269 \\
5 & 204 & 234 \\
6 & 176 & 276 \\
7 & 191 & 225 \\
8 & 184 & 225 \\
\bottomrule
\end{tabular}
\end{table}
A total of 83 participants remained for the word selection task and 63 participants remained for the ranking task. These cleaned datasets were used for all subsequent analyses.

\subsection{Data Analysis}\label{app:DataAnalysis}
Before collecting the data, we hypothesized the following models could accurately describe the response time:
\begin{itemize}
    \item \textbf{\underline{Hypothesis 1} Linear model for selection queries}: Drawing from the literature on serial scanning, we hypothesize that the response time increases linearly with the number of options: $ \widehat{t}_{\text{selection}} = \beta_0 + \beta_1 |\mathcal{S}|$.
    \item \textbf{\underline{Hypothesis 2} Logarithmic model for selection queries}: Based on Hick's law, we hypothesize that the response time increases logarithmically with the number of options: $ \widehat{t}_{\text{selection}} = \beta_0 + \beta_1 \log|\mathcal{S}|$.
    \item \textbf{\underline{Hypothesis 3} Linear model for ranking queries}: Assuming reading or viewing the options dominates the burden, we hypothesize that the response time increases linearly with the number of options: $ \widehat{t}_{\text{rank}} = \beta_0 + \beta_1 |\mathcal{S}|$.
    \item \textbf{\underline{Hypothesis 4} Quadratic model for ranking queries}: Inspired by the complexity of simple sorting algorithms like Bubble sort, we hypothesis the response time increases quadratically with the number of options: $ \widehat{t}_{\text{rank}} = \beta_0 + \beta_1 |\mathcal{S}|^2$.   
\end{itemize}
We evaluated these candidate parametric forms on the cleaned dataset. Table \ref{tab:model_comparison} presents the complete regression results for all candidate models. For both question types, the linear models achieved higher $R^2$ values and lower \ac{MSE}, confirming superior fit as concluded by the Vuong tests. Notably, the slope for ranking queries ($4.41$) is substantially steeper than for selection queries ($0.63$), indicating that each additional word imposes a much greater time burden when participants must produce a complete ranking rather than identify a single word.

\begin{table*}[t]
    \centering
    \caption{Comparison of candidate response time models fitted using least-squares regression.}
    \label{tab:model_comparison}
    \begin{tabular}{llcccccc}
    \toprule
    \textbf{Query Type} & \textbf{Model} & \textbf{$\beta_0$ (SE)} & \textbf{$\beta_1$ (SE)} & \textbf{$R^2$} & \textbf{Adjusted $R^2$} & \textbf{MSE} & \textbf{$N$} \\
    \midrule
    Selection & Linear & 4.01 (0.24) & 0.63 (0.04) & 0.112 & 0.111 & 11.22 & 1648 \\
    Selection & Logarithmic & 3.10 (0.32) & 1.84 (0.14) & 0.097 & 0.097 & 11.36 & 1648 \\
    \midrule
    Ranking & Linear & $-0.32$ (0.88) & 4.41 (0.16) & 0.373 & 0.372 & 125.44 & 1256 \\
    Ranking & Quadratic & 9.69 (0.57) & 0.42 (0.02) & 0.356 & 0.356 & 127.69  & 1256 \\
    \bottomrule
    \end{tabular}
\end{table*}

\subsection{Item difficulty stratification and effect on response time}\label{app:difficulty_time}
Curriculum learning \cite{Bengio2009CurriculumLearning} suggests that optimal learning strategies often progress from simple examples to more complex ones. We hypothesized that our information-gain-based active learning strategy might exhibit a curriculum-like progression: selecting relatively easy queries at first, and transitioning to more challenging boundary cases as the uncertainty about the classifier decreases. If true, learning stage could confound the response time model beyond the effect of set size $|\mathcal{S}|$.

To test this, we analyzed queries from three stages of the learning process: early (first 5 queries), mid-stage (around iteration 200), and late-stage (around iteration 1000). Using selection stage as a proxy for difficulty, we examined whether queries selected at different points in training exhibited different response time patterns after controlling for set size.

\begin{figure}
    \centering
    \includegraphics[width=0.8\linewidth]{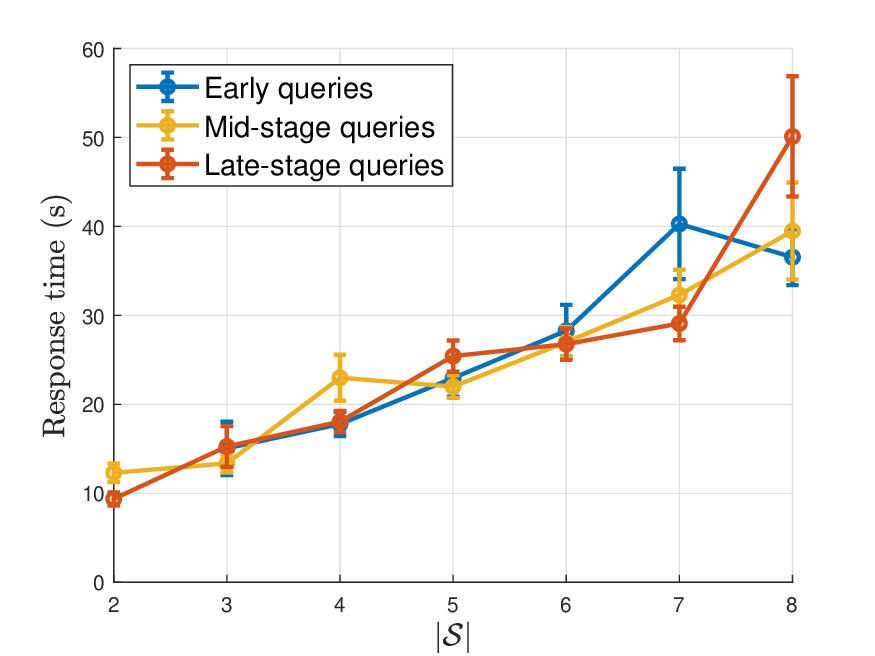}
    \caption{Query Response time versus query set size stratified by stage of the learning process. Queries were sampled from early (first 5 queries), mid (iteration ~200), and late stages (iteration ~1000). The shaded area represents the standard error. The three categories show highly overlapping trends, indicating that learning stage contributes minimal variance beyond set size. }
    \label{fig:difficulty}
\end{figure}

Figure \ref{fig:difficulty} presents the response time as a function of word set size $|\mathcal{S}|$ for these three difficulty tiers. To formally test whether difficulty explains additional variance beyond set size, we conducted an analysis of covariance (ANCOVA) comparing a baseline model containing only \( |\mathcal{S}| \) to one that also included difficulty as a categorical predictor. The extended model did not improve fit (\( p = 0.90 \)), indicating that once query length is accounted for, difficulty contributes no measurable additional variance to response time. Consequently, we model response time solely as a function of \( |\mathcal{S}| \) in the main analysis.
\end{document}

%% file: Acronyms.tex
\acrodef{ACDIS}[ACDIS]{Adaptive Communication Decision and Information Systems}
\acrodef{AL}[AL]{Active Learning}
\acrodef{MMSE}[MMSE]{Minimum Mean-Square Error}
\acrodef{MSE}[MSE]{Mean-Square Error}
\acrodef{KL}[KL]{Kullback-Leibler}
\acrodef{BBVI}[BBVI]{Black Box Variational Inference}
\acrodef{LLM}[LLM]{Large Language Model}
\acrodef{LCC}[LCC]{log-concave}
\acrodef{ELBO}[ELBO]{log Evidence Lower BOund}
\acrodef{NRC}[NRC]{National Research Council Canada}